\documentclass[pra,twocolumn,a4paper,showpacs,aps,10pt,footinbib,allowtoday]{revtex4-2}

\usepackage{fontenc}
\usepackage[latin9]{inputenc}
\usepackage{amsmath}
\usepackage{amsthm}
\usepackage{amssymb}
\usepackage{stmaryrd}
\usepackage{graphicx}
\usepackage{fullpage}
\usepackage{xspace}
\usepackage{multirow}
\usepackage{rotating}
\usepackage[table]{xcolor}
\usepackage{mathrsfs}
\usepackage{mathtools}
\usepackage{braket}
\usepackage{upgreek}
\usepackage{comment}
\usepackage[normalem]{ulem}
\usepackage{caption}
\captionsetup{justification=raggedright,singlelinecheck=false}
\usepackage[hidelinks]{hyperref}
\usepackage{cleveref}
\usepackage{footmisc}
\usepackage{enumitem}

\makeatletter
%\numberwithin{equation}{section}
%\numberwithin{figure}{section}
\theoremstyle{definition}

\theoremstyle{plain}
\newtheorem{lem}{\protect\lemmaname}
\theoremstyle{plain}

\theoremstyle{plain}

\theoremstyle{plain}
\newtheorem{cor}{\protect\corollaryname}
\theoremstyle{plain}
\newtheorem{rem}{\protect\remarkname}

\DeclareMathOperator{\Tr}{Tr}

\@ifundefined{showcaptionsetup}{}{%
 \PassOptionsToPackage{caption=false}{subfig}}
\usepackage{subfig}
\makeatother

\providecommand{\corollaryname}{Corollary}
\providecommand{\definitionname}{Definition}
\providecommand{\lemmaname}{Lemma}
\providecommand{\propositionname}{Proposition}
\providecommand{\theoremname}{Theorem}
\providecommand{\remarkname}{Remark}

\definecolor{darkgreen}{rgb}{0,0.5,0}
\newcommand{\vicky}[1]{{\color{black} #1}}

\newcommand{\N}{\mathbb{N}}
\newcommand{\I}{\mathbb{I}}
\newcommand{\R}{\mathbb{R}}

\newcommand{\cH}{\mathcal{H}}
\newcommand{\cL}{\mathcal{L}}
\newcommand{\cQ}{\mathcal{Q}}
\newcommand{\cF}{\mathcal{F}}
\newcommand{\cA}{\mathcal{A}}

\newcommand{\cC}{\mathcal{C}}
\newcommand{\cX}{{X}}
\newcommand{\cY}{{Y}}
\newcommand{\bA}{\boldsymbol{A}}
\newcommand{\bB}{\boldsymbol{B}}
\newcommand{\tA}{\text{A}}
\newcommand{\tB}{\text{B}}
\newcommand{\cNS}{\mathcal{N\!S}}
\newcommand{\cNC}{\mathcal{N\!C}}

\newcommand{\pax}{p_{\tA}}
\newcommand{\paxax}{p_{\tA}(a|x)}

\newcommand{\tpa}{\hat{p}_\tA}
\newcommand{\tp}{p'}

\newcommand{\prlparagraph}[1]{\emph{#1.---}}

\DeclarePairedDelimiter\abs{\lvert}{\rvert}%
\DeclarePairedDelimiter\norm{\lVert}{\rVert}%

\makeatletter
\let\oldabs\abs
\def\abs{\@ifstar{\oldabs}{\oldabs*}}

\let\oldnorm\norm
\def\norm{\@ifstar{\oldnorm}{\oldnorm*}}
\makeatother

\crefformat{footnote}{#2\footenotemark[#1]#3}

\begin{document}
\title{An invertible map between Bell non-local and contextuality scenarios}

\author{Victoria J Wright}
\email{victoria.wright@icfo.eu}
\affiliation{ICFO-Institut de Ciencies Fotoniques, The Barcelona Institute of Science and Technology, 08860 Castelldefels, Spain}

\author{M\'at\'e Farkas}
\email{mate.farkas@york.ac.uk}
\affiliation{ICFO-Institut de Ciencies Fotoniques, The Barcelona Institute of Science and Technology, 08860 Castelldefels, Spain}
\affiliation{Department of Mathematics, University of York, Heslington, York, YO10 5DD}

\begin{abstract}
We present an invertible map between correlations in any bipartite Bell scenario and behaviours in a family of contextuality scenarios. The map takes local, quantum and non-signalling correlations to non-contextual, quantum and contextual behaviours, respectively. Consequently, we find that the membership problem of the set of quantum contextual behaviours is undecidable, the set cannot be fully realised via finite dimensional quantum systems and is not closed. Finally, we show that neither this set nor its closure is the limit of a sequence of computable supersets, due to the result MIP*=RE.
\end{abstract}

\maketitle

\prlparagraph{Introduction}
Bell non-locality \cite{nonlocality} describes correlations between space-like separated experiments that are impossible in any locally realistic theory. Such correlations are, however, allowed in quantum theory. Beyond their fundamental relevance these correlations have technological applications such as secure random number generation \cite{RNG} and cryptography \cite{DIQKD}. 

Generalised contextuality \cite{spekkens2005contextuality} similarly describes correlations that are absent from classical physics but instead of space-like separation, these correlations occur in experiments where there are \emph{operationally equivalent} experimental procedures. For example, two preparation procedures of a system are operationally equivalent if \vicky{every measurement on the system leads to the same statistics for both preparation procedures}. Contextual correlations have also found practical relevance, for example, in state discrimination~\cite{schmid2018contextual} and demonstrating quantum advantage in communication tasks~\cite{tavakoli2020measurement}.

One way to enforce an operational equivalence between preparations is by using the setup of a Bell non-locality experiment (known as a Bell scenario)\vicky{, under the assumption that no signal can travel faster than light}. \vicky{In a two-party Bell scenario two parties,} Alice and Bob\vicky{,} share a physical system. \vicky{In some frame of reference,} Alice selects and performs a measurement $x$ \vicky{from some pre-agreed options} on her \vicky{subsystem}, then \vicky{Bob measures his subsystem at a time before any light signal could have arrived. 

Under the no-signalling assumption,} the statistics Bob can observe  \vicky{from such a measurement} must not depend on $x$, otherwise \vicky{by performing this procedure with many shared systems simultaneously Bob} could infer Alice's choice $x$ \vicky{and a faster-than-light signal could be transmitted from Alice to Bob}. \vicky{It follows that viewing Alice's measurement of her subsystem as a preparation procedure for Bob's subsystem, }the preparation of Bob's system given \vicky{by} a choice, $x$, of Alice must be operationally equivalent to that given \vicky{by} any other choice, $x'$, of Alice. In this way, a Bell scenario is viewed as a \emph{remote}-preparation and measurement experiment with preparation equivalences\vicky{, and is therefore, an example of a contextuality scenario}.

\vicky{I}n this \vicky{Letter} we \vicky{use this intuition to define a} mapping between \vicky{these} scenarios and show that the set of quantum correlations in a given two-party Bell scenario is isomorphic to the union of the sets of quantum correlations in an indexed \footnote{Within the family of contextuality scenarios, some scenarios appear multiple times. The indexing of the scenarios avoids multiple Bell correlations (that are equivalent up to relabelling) mapping to the same contextual correlation.} family of contextuality scenarios \footnote{The fact that one Bell scenario maps to a family of contextuality scenarios was not previously acknowledged in the literature.} (see Fig.~\ref{fig:map}). The quantum Bell correlations we consider are those given by the tensor product formalism for potentially infinite dimensional quantum systems, denoted $\mathcal{C}_{qs}$ for quantum spatial correlations. We further show that this mapping is also a bijection between the local/non-signalling correlations and the non-contextual/contextual correlations, respectively, in these scenarios.

\vicky{The map and showing its theory-preserving nature form our first main contribution. Combining these results with the remote-preparation perspective shows that: \emph{if a physical theory predicts the generalised contextual correlations of quantum theory, then that theory is exactly limited to producing the quantum spatial correlations in any two-party Bell scenario, under the no-signalling assumption.} Thus, we demonstrate a characterisation of the set of the quantum spatial correlations in terms of contextuality.} 

Th\vicky{e} connection between two party Bell scenarios and (prepare-and-measure) contextuality scenarios is \vicky{noted} in various works~\cite{LIANG20111,schmid2018contextual,PhysRevA.97.062103}, see also \footnote{Viewing a Bell scenario as a remote-preparation and measurement experiment has also been used to link entanglement and contextuality~\cite{plavala2022contextuality}, as well as non-locality and quantum advantage in \emph{oblivious communication tasks}~\cite{HTMB17,debaanu}.}. \vicky{In these works, t}he relationship is described via examples and the general case is not addressed\vicky{, meaning the statement above was not established}.

%Intuitively, a two-party Bell scenario is viewed as a prepare-and-measure experiment for one of the parties, in which the systems are prepared remotely by the other party's measurements. From this viewpoint, no-signalling constraints in the Bell scenario translate to preparation equivalences in the prepare-and-measure experiment. Viewing a Bell scenario as a remote prepare-and-measure experiment has also recently been considered in REFOTFRIEDPLAVAVA wherein...

\vicky{Furthermore, i}t was previously thought that all contextuality scenarios of a certain kind (in which there are no measurement equivalences and the preparation equivalences comprise various decompositions of one single hypothetical preparation) could be mapped to Bell scenarios in this manner \cite[Sec.~VII]{schmid2018contextual}. However, we find examples of such scenarios in which this mapping is not possible. Of course, this does not rule out an isomorphism in this case but a different map would be required. 

\vicky{In our second main contribution, w}e use our isomorphism to deduce various properties of the quantum set of contextual correlations, including: membership undecidability, the necessity of infinite dimensional quantum systems in realising all quantum correlations, and non-convergence to the quantum set of semidefinite programming (SDP) hierarchies~\cite{tavakoli2021bounding,chaturvedi2021characterising}.

This final result follows from showing that a computable hierarchy of outer approximations converging to the quantum set of contextual correlations would give rise to an algorithm capable of deciding the weak membership problem for the closure $\cC_{qa}$ of $\cC_{qs}$. However, this problem is known to be undecidable as a consequence of the result $\mathrm{MIP}^*=\mathrm{RE}$ \cite{ji2020mip}. This result raises several open questions. To what superset, $\cQ_\infty$, of quantum \vicky{correlations} do the SDP hierarchies in Refs.~\cite{tavakoli2021bounding,chaturvedi2021characterising} converge? What would be the image of $\cQ_\infty$ in the Bell setting under our mapping? A natural candidate could be the set of quantum commuting correlations\vicky{, which generally maps to a strict superset of the quantum contextual correlations under our mapping}. If this is the case, does $\cQ_\infty$ have a physical interpretation in the contextuality setting? Alternatively, the image of $\cQ_\infty$ might provide a new outer approximation of the set $\cC_{qs}$. 

\vicky{In the main text we will describe our map for Bell correlations in which each of Alice's outcomes occurs with non-zero probability. This case encapsulates the central concepts of the map and avoids some technicalities of the general case. In the Appendices we provide a complete description of the map which is used to prove our main results.}

\onecolumngrid\
\begin{center}\
\vspace{-1.5cm}\begin{figure}[b]\
\includegraphics[scale=0.3]{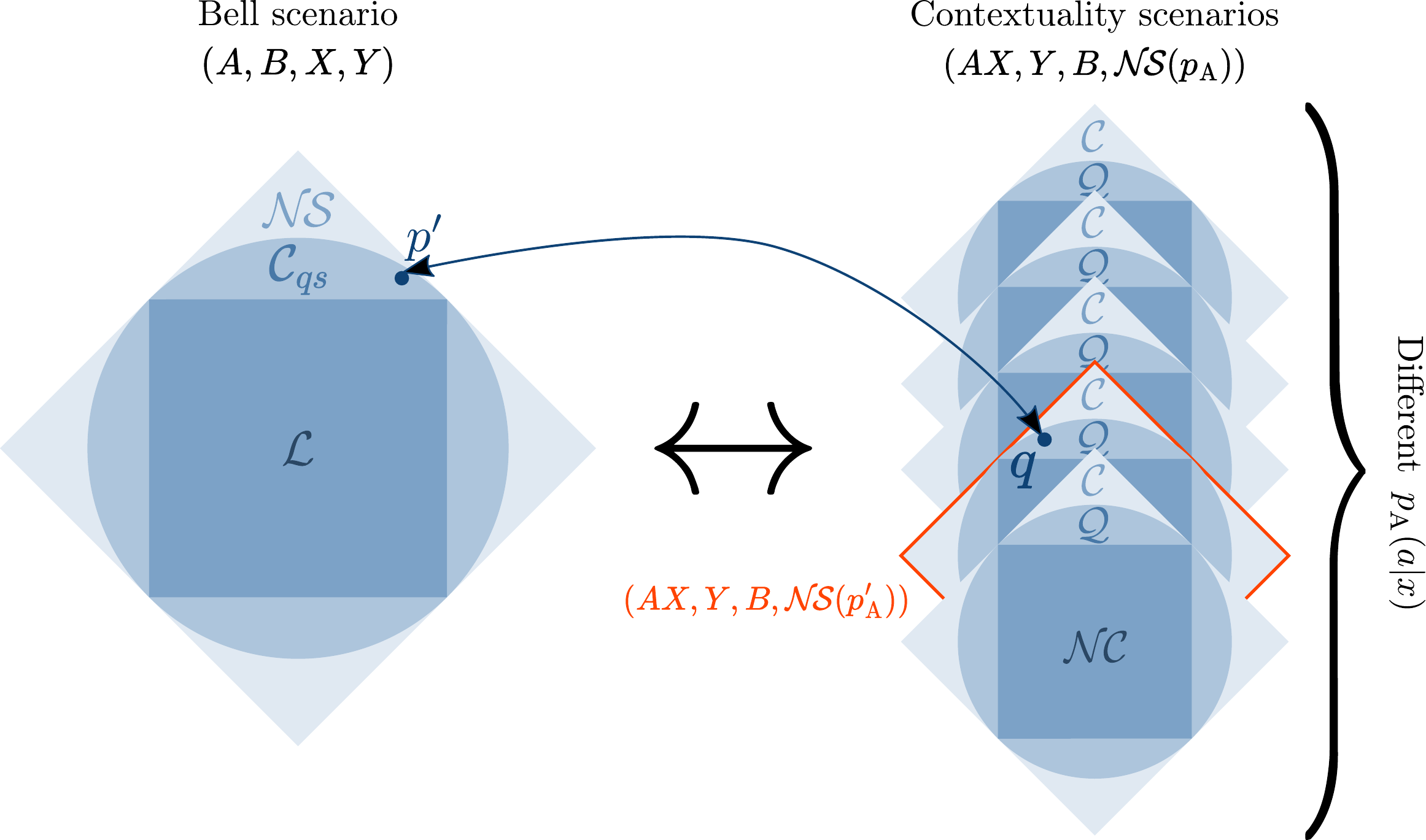}\
\caption{\label{fig:map}A schematic representation of the invertible map between correlations in a Bell scenario and behaviours in a family of contextuality scenarios. \vicky{Here $\cL$, $\cC_{qs}$ and $\cNS$ denote the local, quantum spatial and non-signalling sets of correlations in Bell scenarios while $\cNC$, $\cQ$ and $\cC$ denote the non-contextual, quantum and contextual sets of behaviours in contextuality scenarios (see the main text for more details).}}\
\end{figure}\
\end{center}\
\twocolumngrid\
\prlparagraph{\vicky{Bell scenarios}}
\vicky{A \emph{two-party Bell scenario} comprises two space-like separated experiments. In the first, a party, call her Alice, selects an input from the set $[X]:=\{1,\ldots,X\}$, for some $X\in\N$ and observes an outcome from a set $[A]$, for some $A\in\N$. In the second, another party, call him Bob, similarly selects an input from the set $[Y]$, for some $Y\in\N$ and observes an outcome from a set $[B]$, for some $B\in\N$. The specific scenario can therefore be identified by the tuple of four numbers $(A,B,X,Y)$, indicating the numbers of inputs and outputs for each party. Unless otherwise stated, variables $a,b,x,y$ take values from the sets $[A],[B],[X],[Y]$, throughout.

Given a Bell scenario $(A,B,X,Y)$, a correlation is given by a vector $p\in\R^{ABXY}$, with entries $p(a,b|x,y)$ that specify the probability of Alice and Bob observing outcomes $a$ and $b$ given inputs $x$ and $y$, respectively. In this work we will primarily consider the set, $\cC_{qs}$, of quantum correlations in a Bell scenario using the tensor product formulation and allowing for infinite dimensional quantum systems. A correlation, $p$, is in the quantum set, $\cC_{qs}$, if 

\begin{equation}
p(a,b|x,y)=\Tr\left(M^x_a\otimes N^y_b\rho\right),
\end{equation}
for some positive-operator-valued measures (POVMs---described in the finite-outcome case by a collection of positive semidefinite operators summing to the identity operator) $M^x=\{M^x_a\}_{a}$ and $N^y=\{N^y_b\}_{b}$ on a separable Hilbert spaces $\cH_\tA$ and $\cH_\tB$, respectively, and a density operator (positive semidefinite operator with unit trace) $\rho$ on $\cH_\tA\otimes\cH_\tB$.}

A strict superset of the quantum set is the so-called \textit{no-signalling} set, described by correlations $p$ satisfying the no-signalling constraints
\begin{gather}
\sum_b p(a,b|x,y) = \sum_b p(a,b|x,y') \quad \forall a,x,y,y' \\
\sum_a p(a,b|x,y) = \sum_a p(a,b|x',y) \quad \forall b,y,x,x' .
\end{gather}
A strict subset of the quantum set (considered ``classical'' in Bell scenarios) is the \textit{local set}. A correlation\vicky{,} $p$\vicky{,} is local if there exists a measurable space $(\Lambda,\Sigma)$, a probability measure $\mu:\Sigma\rightarrow[0,1]$, and local probability distributions $l^\tA(a|x,E)$ and $l^\tB(b|y,E)$ satisfying $\sum_a l^\tA(a|x,E)=\sum_b l^\tB(b|y,E)=1$ for all $x$, $y$ and non-empty $E\in\Sigma$, such that
\begin{equation}
p(a,b|x,y)=\int_\Lambda l^\tA(a|x,\lambda)l^\tB(b|y,\lambda)\mathrm{d}\mu(\lambda).
\end{equation} 
\vicky{The relationship between the sets $\cL$, $\cC_{qs}$ and $\cNS$ is depicted on the left-hand side of Fig.~\ref{fig:map}.

\prlparagraph{Contextuality scenarios}
A \emph{contextuality scenario} is an experiment capable of revealing the impossibility of modelling a physical system with a non-contextual ontological model. A key concept in generalised contextuality is operational equivalence, so we will want operational equivalence to appear in our experiment. For our purposes operational equivalence between preparation procedures is sufficient. 

Two preparation procedures\vicky{, $P_1$ and $P_2$,} for a system are operationally equivalent, denoted $P_1\simeq P_2$, in a theory when any outcome of any measurement on the system would occur with the same probability whether the measurement is performed on a system prepared with procedure $P_1$ or $P_2$. 
%That is, the probability of getting outcome $a$ from measurement $M$ having performed preparation $P_1$, $p(a|P_1,M)$, is equal to that of preparation $P_2$, $p(a|P_2,M)$, for all measurements $M$ and their possible outcomes $a$.  

A \emph{prepare-and-measure contextuality scenario} is an experiment consisting of performing one of $X$ preparation procedures on a system then one of $Y$ measurement procedures. Mixtures of the preparation procedures assigned to each label $x\in[X]$ must satisfy some operational equivalences which are specified by the scenario. These preparation equivalences are of the form:
\begin{equation}\label{eq:prep_equiv}
\sum_x \alpha_x P_x \simeq \sum_x \beta_x P_x,
\end{equation} 
for $\alpha_x\ge0$, $\beta_x\ge0$ such that $\sum_x\alpha_x=\sum_x\beta_x=1$.
For example, a contextuality scenario could have $X=4$ preparations, $P_j$ for $j\in[4]$, that must satisfy $\frac12 P_1 + \frac12 P_2 \simeq \frac12 P_3 + \frac12 P_4$. A valid realisation of this experiment could be to use a qubit system with $P_1$ and $P_2$ being the eigenstates the Pauli--$\operatorname{Z}$ operator, while $P_3$ and $P_4$ are the eigenstates of the Pauli--$\operatorname{X}$ operator.

Generally, a prepare-and-measure contextuality scenario is identified by a tuple $(\cX,\cY,B,\mathcal{OE}_P,\mathcal{OE}_M)$ indicating that it concerns $\cX$ preparations satisfying equivalences $\mathcal{OE}_P$ and $\cY$ measurements each with $B$ outcomes satisfying equivalences $\mathcal{OE}_M$. Since we only consider preparation equivalences we will omit the final element of the tuple.

We are interested in the achievable correlations within a given theory in each contextuality scenario. Each correlation is described by a vector $q\in\R^{XYB}$ with entries given by the probability $q(b|x,y)$ of seeing outcome $b$ after performing measurement $y$ on a system prepared with procedure $x$. We will call these vectors \emph{behaviours} to distinguish them from the correlations in Bell scenarios.} 

A behaviour\vicky{,} $q$\vicky{,} is in the set of \emph{contextual} behaviours (i.e.~behaviours realisable in some contextual theory) if for every equivalence of the form in Eq.~\eqref{eq:prep_equiv} in $\mathcal{OE}_P$ the behaviour satisfies
\begin{equation}
\sum_x \alpha_x q(b|x,y) = \sum_x \beta_x q(b|x,y) \quad \forall b,y.
\end{equation}
\vicky{T}he set of contextual behaviours contains both the sets of quantum and non-contextual behaviours (see below).
A behaviour\vicky{,} $q$\vicky{,} is in the quantum set, $\cQ$, \vicky{of a contextuality scenario $(\cX,\cY,B,\mathcal{OE}_P)$} if 
\begin{equation}
q(b|x,y)=\Tr(N^y_b\rho_x)\,.
\end{equation}
\vicky{for some} POVMs \vicky{$N^y=\{N^y_b\}_{b}$}  on \vicky{a separable Hilbert space} $\cH$ and density operators $\rho_x$ on $\cH$ satisfying $\sum_x \alpha_x \rho_{x} = \sum_x \beta_x \rho_x$ for every equivalence of the form in Eq.~\eqref{eq:prep_equiv} in $\mathcal{OE}_P$.  

A subset of the quantum set (considered ``classical'' in contextuality scenarios) is the \textit{non-contextual set} of behaviours. A behaviour\vicky{,} $q$\vicky{,} is in the non-contextual set if there exists a measurable space $(\Lambda,\Sigma)$, probability measures $\mu_{x}: \Sigma \to [0,1]$ satisfying $\sum_x \alpha_x \mu_{x}(E) = \sum_x \beta_x \mu_x(E)$ for every equivalence relation of the form \eqref{eq:prep_equiv} in $\mathcal{OE}_P$ and so-called \textit{response functions} $\xi_y(b|\cdot)$ for all $b$ and $y$ on $\Lambda$, and $\sum_b\xi(b|E)=1$ for all $E\in\Sigma$, such that 
\begin{equation}
q(b|x,y) = \int_\Lambda\xi_y(b|\lambda)\mathrm{d}\mu_{x}(\lambda)\,.
\end{equation}

\prlparagraph{The map}%\label{sec:bij}
We now define an invertible map taking any non-signalling correlation $p$ in a two-party Bell scenario to a behaviour $q$ from one of a family of contextuality scenarios. We will show that this map defines a bijection between (i) non-signalling Bell correlations and contextual behaviours, (ii) quantum Bell correlations and quantum behaviours, and (iii) local Bell correlations and non-contextual behaviours. 

The basic premise is to imagine the Bell experiment  \vicky{$(A,B,\cX,\cY)$} as a prepare-and-measure experiment wherein if Alice inputs $x$ and observes output \vicky{$a$} \vicky{this constitutes a preparation procedure} $P_{a|x}$ \vicky{for Bob's system} on which he will perform a measurement $y$ \vicky{then} observe an outcome $b$. Then, if we impose that Alice cannot signal to Bob\vicky{,} we know that the average preparation Bob receives when Alice inputs any $x$ must be the same as the average preparation he receives when she inputs any other $x'\in[\cX]$. In other words, if the correlation observed in the Bell experiment is $p$ then the preparations $\sum_ap_\tA(a|x)P_{a|x}$ must be equivalent for all $x$, where $p_\tA(a|x)=\sum_bp(a,b|x,y)$ for any $y$ is the marginal distribution of Alice, which is well-defined due to no-signalling.

\vicky{Thus, under the no-signalling assumption, a Bell scenario $(A,B,X,Y)$ implements a contextuality scenario $(AX,Y,B,\cNS(p_\tA))$, where $\cNS(p_\tA)$ denotes the preparation equivalences 
\begin{equation}\label{eq:NSequiv}
\sum_{a}p_\tA(a|1)P_{a|1}\simeq \cdots 
 \simeq\sum_{a}p_\tA(a|\cX)P_{a|\cX}, 
\end{equation}
implied by the no-signalling assumption, which we will encode in the Cartesian product of $X$ vectors in $\R^A$, where the $a$-th element of the $x$-th vector is $\paxax$.

Based on this intuition we define our map for Bell correlations with non-zero marginal distributions for Alice. A correlation $p$ from a Bell scenario $(A,B,X,Y)$ is mapped to a behaviour $q$ in the contextuality scenario $(AX,Y,B,\cNS(p_\tA))$, where 
\begin{equation}\label{eq:q}
q(b|[a|x],y)=\frac{p(a,b|x,y)}{p_\tA(a|x)}.
\end{equation}
Explicitly, our map is 
\begin{equation}
\begin{aligned}
\Gamma:\left[\R^{ABXY},\N^4\right]\to\left[\R^{AXYB},\N^3,\left(\R^A\right)^X\right]&,\\
\left[p,(A,B,X,Y)\right]\mapsto\left[q,(AX,Y,B,\cNS(p_\tA))\right]&,
\end{aligned}
\end{equation}
for $p$ in the non-signalling set of $(A,B,X,Y)$ such that $p_\tA(a|x)\neq0$ where $q$ is defined in Eq.~\eqref{eq:q}.

Notice that the correlations from one Bell scenario are mapped to behaviours from multiple different contextuality scenarios. Each of the contextuality scenarios in the image of a Bell scenario $(A,B,X,Y)$ has $AX$ preparations and $Y$ measurements with $B$ outcomes but the preparation equivalences vary depending on Alice's marginal distribution in the argument correlation. This relationship is depicted in Fig.~\ref{fig:map}.  

We can now define the inverse to our map. Given a contextuality scenario with preparation equivalences satisfying the following criteria, we can always express the equivalences as in Eq.~\eqref{eq:NSequiv}:
\begin{enumerate}[label = (\Roman*)]
\item \label{crit1} comprising a number, $X$, of mixtures each of the same number, $A$, of preparations (since we are considering the case in which $p_\tA(a|x)\neq0$ for all $a$ and $x$) that are all equivalent to one another,
\item and where no preparation appears in more than one mixture.
\end{enumerate} 

The domain of our inverse map will be pairs of a contextuality scenario with such equivalences and a behaviour in that scenario. Explicitly, the inverse of our map is then 
\begin{equation}
\begin{aligned}\label{eq:gaminv}
&\Gamma^{-1}:\left[\R^{AXYB},\N^3,\left(\R^A\right)^X\right]\to\left[\R^{ABXY},\N^4\right],\\
&\left[q,(AX,Y,B,\cNS(p_\tA))\right]\mapsto\left[p,(A,B,X,Y)\right],
\end{aligned}
\end{equation}
for a behaviour $q$ in the contextual set $\cC$ of $(AX,Y,B,\cNS(p_\tA))$ and with $p(a,b|x,y)=p_\tA(a|x)q(b|[a|x],y)$. Note that the $p_\tA(a|x)$ are defined by the coefficients in the preparation equivalences of the contextuality scenario, but end up being equal to the marginals of Alice in the Bell scenario resulting in no conflict of notation.

In Appendix~\ref{fullmap} we extend the map $\Gamma$ to all non-signalling correlations in a given two-party Bell scenario. In this general case, we allow zeroes in the vectors $\cNS(\pax)$ leading to the same contextuality scenario appearing multiple times in the image of the map, but we use the vectors $\cNS(\pax)$ to index the multiple appearances and allow the map to be invertible. Two Bell correlations that are mapped to the same behaviour in two instances of a contextuality scenario are equivalent up to relabelling. 

Under this extension the contextuality scenarios in the image of the map no longer are required to have the same number of preparations in each mixture in the preparation equivalences. That is, criterion~\ref{crit1} for a contextuality scenario to be in the domain of $\Gamma^{-1}$ simply becomes: a number, $X$, of mixtures of preparations that are all equivalent to one another.}

\vicky{We prove our main results about the map} in the Appendices%\footnote{\out{See the Supplemental Material, which includes references \cite{navascues2012physical,lrs,slofstra_2019,coladangelo2020inherently,scholz2008tsirelson,Fri12}}}
. \vicky{Namely, in Appendix~\ref{sec:ctb}} we prove that (given a contextuality scenario of the right type) \vicky{$\Gamma^{-1}$} maps every quantum contextual behaviour $q$ to a quantum spatial correlation $p$. We do so by observing that the problem is equivalent to finding a way to steer Bob's system into the assemblage given by the quantum states in the realisation of $q$. The Schr\"{o}dinger--HJW theorem~\cite{kirk06} then provides an explicit construction for realising the quantum correlation $p$. Appendix~\ref{sec:btc} shows that \vicky{$\Gamma$} maps quantum spatial correlations to quantum contextual behaviours. Then Appendices~\ref{sec:ltnc} and~\ref{sec:nstc} treat the cases of local and non-signalling correlations invertibly mapping to non-contextual and contextual behaviours, respectively.

\prlparagraph{Limitations of the map}
In the literature, it is claimed that any contextuality scenario with preparation equivalences given by multiple decompositions of a single hypothetical preparation
\begin{equation}\label{eq:onehyp}
P_\tB\simeq\sum_{a=1}^{Z }p_{a,1}P_a\simeq\sum_{a=1}^{Z }p_{a,2}P_a\simeq\ldots\simeq\sum_{a=1}^{Z }p_{a,\cX}P_a
\end{equation}
is equivalent to a Bell scenario interpreted as a remote prepare-and-measure experiment \cite[Sec.~VII]{schmid2018contextual}. In Appendix~\ref{sec:counterapp}%\cite{Note4}
, we give an example of a sequence of preparation equivalences of the form in Eq.~\eqref{eq:onehyp} that cannot be reduced to a sequence of equivalences $\cNS (p_\tA)$ \vicky{(even when allowing for the coefficients $p_\tA(a|x)$ to be zero)}, i.e.~\vicky{in our example a single preparation appears in multiple different mixtures}. One can still attempt to map such a scenario, $H$, to a Bell scenario, by embedding behaviours $q$ from $H$ into those from a larger scenario $H'$ (yielding a behaviour $q'$), in which each appearance of a preparation that appears multiple times in $\mathcal{OE}_P$ is treated as a distinct preparation. The resulting sequence of equivalences $\mathcal{OE}'_P$ is of the form $\cNS (p_\tA)$. However, we show via an explicit example that this embedding can map a contextual behaviour $q_c$ in $H$ to a non-contextual behaviour $q'_c$ in $H'$. Thus, using this embedding to connect $H$ to a Bell scenario leads to a contextual behaviour ($q_c$) being mapped to a local correlation [through the embedding $q'_c$ and then Eq.~\eqref{eq:gaminv}]. Therefore, the connection between non-contextuality and locality would be lost by composing this embedding and our map \vicky{$\Gamma$}.

\prlparagraph{The quantum set in contextuality scenarios}
Using the connection we have made between the sets of quantum behaviours in contextuality scenarios and quantum correlations in Bell scenarios, we can transfer various results about quantum non-locality to contextuality. \vicky{Our main results about the quantum contextual set are given in the following four corollaries with proofs} in Appendices~\ref{sec:proof_membership}--\ref{sec:proof_compute}.  %\cite{Note4}
\begin{cor}\label{cor:membership}
The membership problem for the set of quantum behaviours in a contextuality scenario is undecidable.
\end{cor}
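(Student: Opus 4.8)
The plan is to transfer the known undecidability of the membership problem for the set $\cC_{qs}$ of quantum correlations in a Bell scenario --- a consequence of $\mathrm{MIP}^*=\mathrm{RE}$~\cite{ji2020mip} --- to the quantum set $\cQ$ of contextual behaviours, using $\Lambda$ as a computable many-one reduction. In outline: a decision procedure for membership in $\cQ$, composed with $\Lambda$, would decide membership in $\cC_{qs}$, which is impossible.

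The first step is to verify that $\Lambda$ acts effectively on rational data. Given a Bell scenario $(\bA,\bB,\cX,\cY)$ and a correlation $p$ with rational entries, the no-signalling constraints are finitely many linear equalities over $\mathbb{Q}$, so it is decidable whether $p$ is non-signalling; if it is not then $p\notin\cC_{qs}$, and the reduction may output any fixed instance known not to lie in $\cQ$. Otherwise the marginals $p_\tA(a|x)=\sum_b p(a,b|x,y)$ are well-defined rationals, the supports $\cA^p_x=\{a:p_\tA(a|x)>0\}$ and the scenario $(\norm{\cA^p},\cY,\bB,\cNS(p_\tA),\emptyset)$ are produced by a finite rational computation, and the behaviour $q(b|[a|x],y)=p(a,b|x,y)/p_\tA(a|x)$ of Eq.~\eqref{eq:btc} is a quotient of rationals; thus $\Lambda$ maps rational instances to rational instances. (The same applies to $\Lambda^{-1}$ through Eq.~\eqref{eq:ctb}, so the two membership problems are in fact Turing equivalent, although only the direction just described is needed here.) By the equivalences established in Appendices~\ref{sec:ctb} and~\ref{sec:btc} --- that $p\in\cC_{qs}$ if and only if $\Lambda(p)\in\cQ$ --- composing this reduction with a putative decision procedure for membership in $\cQ$ (applied to $(q,(\norm{\cA^p},\cY,\bB,\cNS(p_\tA),\emptyset))$) would decide membership in $\cC_{qs}$, a contradiction. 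Since the scenarios arising in the image of $\Lambda$ have no measurement equivalences and preparation equivalences only of the special form $\cNS(p_\tA)$, the undecidability holds already for this restricted class, hence a fortiori for general contextuality scenarios.

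The point requiring the most care is the interface with the cited hardness result: one must check it applies to instances presented as a rational correlation together with an explicitly given finite Bell scenario (it does --- the instances that encode the halting problem have this form), and that every scenario output by $\Lambda$ genuinely belongs to the class on which $\Lambda$ has been shown to be a bijection onto $\cQ$ (no measurement equivalences, each preparation appearing exactly once in $\cNS(p_\tA)$). Everything else is routine: the rational arithmetic defining $\Lambda$, the decidable no-signalling test, and the fact that the auxiliary tuple $\bA$ returned by $\Lambda$ is irrelevant to the membership query.
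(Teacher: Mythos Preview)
Your argument is correct and follows the same approach as the paper: assume a decision procedure for $\cQ$, compose with the computable map $\Lambda$, and thereby decide membership in $\cC_{qs}$, contradicting the known undecidability result~\cite{slofstra_2019,ji2020mip}. You are more explicit than the paper about the effectiveness of $\Lambda$ on rational data and the handling of non-no-signalling inputs, but the core reduction is identical.
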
 
\begin{cor}\label{cor:finite-dim}
The set of behaviours deriving from finite-dimensional quantum systems in contextuality scenarios is a strict subset of its infinite-dimensional counterpart.
\end{cor}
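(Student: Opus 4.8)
The plan is to leverage the isomorphism $\Lambda$ established earlier together with the corresponding separation for Bell correlations. The key input is the analogous statement in the Bell setting: there exist bipartite Bell scenarios and quantum spatial correlations $p\in\cC_{qs}$ that cannot be realised by any \emph{finite-dimensional} tensor-product strategy. This is a known consequence of the landscape opened by $\mathrm{MIP}^*=\mathrm{RE}$ and related work (indeed, $\cC_{qs}$ is not even closed, so it strictly contains the finite-dimensional quantum correlations $\cC_q$); one can also invoke concrete examples predating that result. So the first step is to fix such a Bell scenario $(\bA,\bB,\cX,\cY)$ and a correlation $p\in\cC_{qs}$ with no finite-dimensional realisation.

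Next I would apply $\Lambda$ to obtain a behaviour $q=\Lambda(p)$ in the contextuality scenario $(\norm{\cA^p},\cY,\bB,\cNS(p_\tA),\emptyset)$. By the result of Appendix~\ref{sec:btc} (that $\Lambda$ maps $\cC_{qs}$ into $\cQ$), $q$ is a quantum contextual behaviour. The crux is then to argue that $q$ admits no finite-dimensional quantum realisation. I would do this by contraposition: suppose $q$ were realised by POVMs $\{N^y_b\}$ on a finite-dimensional $\cH$ and density operators $\rho_{a|x}$ on $\cH$ satisfying the $\cNS(p_\tA)$ equivalences. Feeding this realisation through the constructive argument behind $\Lambda^{-1}$ (Appendix~\ref{sec:ctb}, which uses the Schr\"odinger--HJW theorem) produces an explicit tensor-product quantum realisation of $p=\Lambda^{-1}(q)$; one checks that the Hilbert space dimensions involved in that construction are bounded by functions of $\dim\cH$ and of $\norm{\bA}$, hence finite. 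This would contradict the choice of $p$ as having no finite-dimensional realisation.

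I would then need to be slightly careful about the bookkeeping: $\Lambda^{-1}\circ\Lambda$ is the identity only up to the retained index data $\bA$, and the contextuality scenario attached to $q$ is exactly $(\norm{\cA^p},\cY,\bB,\cNS(p_\tA),\emptyset)$ with index tuple $\bA$, so $\Lambda^{-1}(q)=p$ holds on the nose by Eqs.~\eqref{eq:btc}--\eqref{eq:ctb}. The only subtlety is that the Schr\"odinger--HJW step in Appendix~\ref{sec:ctb} is stated for general separable Hilbert spaces; I would point out that when the input states $\rho_{a|x}$ act on a finite-dimensional space, the purification and the local isometry it provides are likewise finite-dimensional, so finite dimension is preserved by $\Lambda^{-1}$. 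Conversely, Appendix~\ref{sec:btc} already shows $\Lambda$ preserves finite dimension (it just restricts the Bell operators and state). Hence the finite-dimensional quantum behaviours in this scenario map bijectively onto the finite-dimensional quantum correlations in $(\bA,\bB,\cX,\cY)$, and the strict inclusion $\cC_q\subsetneq\cC_{qs}$ transfers verbatim.

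The main obstacle I anticipate is not conceptual but one of precise dimension tracking through the Schr\"odinger--HJW construction: one must confirm that the ancillary Hilbert space Bob's half acquires in that theorem has dimension at most $\norm{\bA}$ (the number of preparations), so that $\dim\cH_\tA\otimes\cH_\tB$ stays finite; this is straightforward since HJW only requires a purifying system of dimension equal to the number of states in the ensemble, but it must be stated explicitly for the corollary to go through. A secondary point is to cite a clean source for the existence of $p\in\cC_{qs}\setminus\cC_q$ in \emph{some} bipartite scenario — either the non-closure of $\cC_{qs}$ (so it exceeds $\overline{\cC_q}=\cC_{qa}$ on some instance) or an explicit finite example — and to make sure the chosen Bell scenario is a legitimate target of the map (no degenerate outcome issues), which it is since $\Lambda$ is defined on all non-signalling correlations.
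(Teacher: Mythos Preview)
Your proposal is correct and is essentially the same argument as the paper's (very terse) proof in Appendix~\ref{sec:proof_finite-dim}: take a $p\in\cC_{qs}$ with no finite-dimensional realisation, map it via $\Lambda$ to a quantum behaviour $q$, and observe that a finite-dimensional realisation of $q$ would, through the Schr\"odinger--HJW construction of Appendix~\ref{sec:ctb}, yield a finite-dimensional realisation of $p=\Lambda^{-1}(q)$. One small correction to your anticipated obstacle: in the construction of Appendix~\ref{sec:ctb} the purifying system $\cH_\tA$ is taken equal to $\cH_\tB=\mathrm{supp}(\rho_\tB)$, so its dimension is bounded by $\dim\cH$ (the rank of $\rho_\tB$), not by $\norm{\bA}$; either way the finiteness is immediate.
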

\begin{cor}\label{cor:nclosed}
In general, the set of behaviours in a contextuality scenario is not closed.
\end{cor}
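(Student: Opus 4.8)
The plan is to transfer the known non-closure of the quantum correlation set in Bell scenarios through the isomorphism $\Lambda$. Recall that $\cC_{qs}$ is not closed: $\cC_{qs}\subsetneq\overline{\cC_{qs}}=\cC_{qa}$ for a suitable Bell scenario $(\bA,\bB,\cX,\cY)$, a fact implied by $\mathrm{MIP}^*=\mathrm{RE}$~\cite{ji2020mip}. I would fix such a scenario and a sequence $(p_n)_{n\in\mathbb{N}}\subset\cC_{qs}$ with $p_n\to p^\star$ for some $p^\star\in\cC_{qa}\setminus\cC_{qs}$. Since quantum correlations are non-signalling and the no-signalling constraints are closed, $p^\star$ is non-signalling, so $\Lambda$ is defined on $p^\star$ as well.

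First I would arrange that the entire sequence maps into one fixed contextuality scenario. The target scenario $(\norm{\cA^{p_n}},\cY,\bB,\cNS(p_{n,\tA}),\emptyset)$ depends on Alice's marginal $p_{n,\tA}$, so a sequence with varying marginals would be sent to behaviours in genuinely different scenarios and ``convergence of the images'' would be meaningless. To avoid this I would choose the separating example above so that every $p_n$ has one and the same, full-support Alice marginal $p_\tA$; by continuity of marginalisation the limit $p^\star$ then also has marginal $p_\tA$. (Such examples should be available: one checks that standard constructions witnessing $\cC_{qs}\subsetneq\cC_{qa}$ can be realised with uniform single-party marginals; failing that, one argues directly about the fixed-marginal slice, using that $\Lambda$ identifies $\{p\in\cC_{qs}:p_\tA\text{ fixed}\}$ with the quantum set of the corresponding contextuality scenario.) With $p_\tA$ fixed and of full support, $\cA^{p_n}_x=\cA^{p^\star}_x=[A_x]$ for all $x$, so all of $p_1,p_2,\dots$ and $p^\star$ lie in the domain of $\Lambda$ with the common image scenario $C:=(\norm{\cA^{p_\tA}},\cY,\bB,\cNS(p_\tA),\emptyset)$ and common tuple $\bA$.

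Next I would use bicontinuity. Restricted to non-signalling correlations of $(\bA,\bB,\cX,\cY)$ with Alice marginal $p_\tA$, both $\Lambda$ and $\Lambda^{-1}$ are the affine maps with fixed coefficients $1/p_\tA(a|x)$ and $p_\tA(a|x)$ given in Eqs.~\eqref{eq:btc} and~\eqref{eq:ctb}, hence continuous; in particular $\Lambda(p_n)\to\Lambda(p^\star)$. By the results of Appendices~\ref{sec:ctb} and~\ref{sec:btc}, each $\Lambda(p_n)$ is a quantum behaviour of $C$ because $p_n\in\cC_{qs}$, whereas $\Lambda(p^\star)$ cannot be: if it were, then Appendix~\ref{sec:ctb} would give $\Lambda^{-1}(\Lambda(p^\star))\in\cC_{qs}$, contradicting $\Lambda^{-1}(\Lambda(p^\star))=p^\star\notin\cC_{qs}$. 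Hence the quantum set $\cQ$ of $C$ contains the convergent sequence $(\Lambda(p_n))$ but not its limit, so it is not closed, which establishes the claim for the scenario $C$.

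The main obstacle is precisely the marginal-matching step of the second paragraph: one must certify that the non-closure of $\cC_{qs}$ is already visible inside a single slice $\{p:p_\tA\text{ fixed}\}$, equivalently that some fixed-marginal contextuality scenario $\cNS(p_\tA)$ has non-closed quantum set. Once the whole approximating sequence is confined to one contextuality scenario, the rest is immediate from the affine, bicontinuous form of $\Lambda$ together with the isomorphism established in the earlier appendices.
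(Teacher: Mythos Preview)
Your overall architecture matches the paper's: pick $p^\star\in\cC_{qa}\setminus\cC_{qs}$, force a sequence of $\cC_{qs}$ approximants to share Alice's marginal with $p^\star$, push through $\Lambda$, and use that $\Lambda,\Lambda^{-1}$ are affine (hence continuous) on the fixed-marginal slice. You correctly flag the marginal-matching step as the crux, but you leave it unresolved (``such examples should be available''; ``one checks\ldots''). That is the one genuine gap.

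The paper closes this gap constructively, without appealing to any special feature of the known $\cC_{qa}\setminus\cC_{qs}$ witnesses. First it applies Lemma~\ref{lem:Cqa_small} to pass from $p^\star$ to $p'=\tau(p^\star)$ in a smaller Bell scenario where $p'_\tA(a|x)\in(0,1)$ everywhere; this guarantees $\Lambda$ lands in a single, nondegenerate contextuality scenario. Then, instead of hoping some existing approximating sequence already has fixed marginals, the paper \emph{builds} one: set $p^{\mathrm{int}}(a,b|x,y)=p'_\tA(a|x)\cdot\tfrac{1}{B_y}$, observe that $p^{\mathrm{int}}$ lies in the relative interior of the local polytope (it is a full-support product of full-support conditionals), and take $p^n=\tfrac1n\,p^{\mathrm{int}}+(1-\tfrac1n)\,p'$. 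Each $p^n$ is in the relative interior of $\cC_{qa}$ and hence in $\cC_{qs}$ (since $\mathrm{relint}\,\overline{\cC_{qs}}=\mathrm{relint}\,\cC_{qs}\subset\cC_{qs}$), and by construction $p^n_\tA=p'_\tA$ for all $n$. This is exactly the missing ingredient you were looking for; with it, your continuity argument and the $\Lambda^{-1}$-direction (Appendix~\ref{sec:ctb}) finish the proof just as you describe. A minor remark: non-closure of $\cC_{qs}$ is already due to Slofstra~\cite{slofstra_2019}; $\mathrm{MIP}^*=\mathrm{RE}$ is not needed here.
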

\begin{cor}\label{cor:compute} No hierarchy of SDPs converges to the quantum contextual set $\cQ$ or its closure $\overline{\cQ}$ for all contextuality scenarios.
\end{cor}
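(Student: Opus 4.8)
The plan is to argue by contradiction, leveraging the bijection $\Lambda$ (and $\Lambda^{-1}$) together with the fact that $\Lambda^{-1}$ is, in a precise sense, computable on the level of the defining data. Suppose there were a hierarchy of semidefinite programs $\{\cQ_n\}_{n\in\mathbb N}$, computable uniformly in the contextuality scenario, with $\cQ_n \supseteq \cQ_{n+1} \supseteq \cQ$ and $\bigcap_n \cQ_n = \overline{\cQ}$ (the statement for $\cQ$ itself follows a fortiori, since a hierarchy converging to $\cQ$ also sandwiches $\overline{\cQ}$). I would fix a Bell scenario $(\bA,\bB,\cX,\cY)$ and use the simplest index choice (e.g.\ full-support marginals and the canonical relabelling $Q$'s $\leftrightarrow P_{a|x}$) to identify, via Appendix~\ref{sec:ctb}--\ref{sec:btc}, the set $\cC_{qs}$ in that Bell scenario with $\cQ$ in the corresponding contextuality scenario; crucially this identification is an affine isomorphism given by the explicit formulas \eqref{eq:btc}--\eqref{eq:ctb}, so it maps the SDP-describable outer approximations $\cQ_n$ to SDP-describable outer approximations of $\cC_{qs}$ whose intersection is $\overline{\cC_{qs}} = \cC_{qa}$ (using that $\Lambda$ is a homeomorphism on the relevant compact sets, so it commutes with closure).

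Next I would turn this into a weak-membership algorithm for $\cC_{qa}$. Given a target correlation $p$ and a precision $\varepsilon$, the algorithm: (i) computes its image $q = \Lambda(p)$ and the contextuality scenario; (ii) runs the SDP solver for $\cQ_1,\cQ_2,\ldots$, and in parallel runs a search over finite-dimensional quantum models (or over the natural inner SDP/see-saw type approximations) producing points provably in $\cC_{qs}$; (iii) halts and reports ``far'' when some $\cQ_n$ certifies $q$ is $\varepsilon$-outside, and reports ``close'' when the inner search certifies a quantum point within $\varepsilon$ of $q$. Because $\bigcap_n \cQ_n = \overline{\cQ}$ and the quantum set is the closure of its finite-dimensional part (this is exactly Corollary~\ref{cor:finite-dim}/the $\cC_{qs}$ vs $\cC_{qa}$ picture transported back), for any $p$ either $p$ is $2\varepsilon$-far from $\cC_{qa}$, in which case $q$ is bounded away from $\overline{\cQ}$ and some finite stage $\cQ_n$ detects this, or $p$ is within $\varepsilon$ of $\cC_{qa}$, in which case a finite-dimensional witness approximating it to within $2\varepsilon$ exists and is eventually found; one of the two branches terminates. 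Transporting the verdict back through $\Lambda^{-1}$ (affine, explicit, Lipschitz with controllable constant since the only denominators are the fixed nonzero marginals $p_\tA(a|x)$) yields a weak-membership decision for $\cC_{qa}$ on the original Bell correlation.

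The final step is to invoke $\mathrm{MIP}^* = \mathrm{RE}$: by Ref.~\cite{ji2020mip}, the weak membership problem for $\cC_{qa}$ in general Bell scenarios is undecidable, contradicting the algorithm just constructed; hence no such computable SDP hierarchy exists, and the same applies to $\cQ$ since a hierarchy converging to $\cQ$ would in particular have $\bigcap_n \cQ_n \subseteq \overline{\cQ}$ and, being outer, would also converge to $\overline{\cQ}$, which is impossible.

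I expect the main obstacle to be the precision bookkeeping in the reduction: one must verify that the map $\Lambda^{-1}$ (and $\Lambda$) distorts distances only by a factor depending on the fixed scenario and the fixed marginals $\tpa(a|x)$, so that an $\varepsilon$-gap on one side maps to a $c\varepsilon$-gap on the other with $c$ computable from the instance; and one must be careful that the ``inner'' approximation used to certify closeness is itself computable and genuinely dense in $\cQ$ (again via the finite-dimensional/Schr\"odinger--HJW construction of Appendix~\ref{sec:ctb}). Everything else — the contradiction with $\mathrm{MIP}^*=\mathrm{RE}$, the dovetailing of outer and inner searches — is standard once the affine, computable, distance-controlled nature of $\Lambda^{\pm 1}$ is nailed down.
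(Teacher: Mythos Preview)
Your overall strategy coincides with the paper's: assume a converging SDP hierarchy on the contextuality side, dovetail it against a finite-dimensional inner search (the paper's algorithm (FIN)) on the Bell side, and thereby decide weak membership in $\cC_{qa}$, contradicting $\mathrm{MIP}^*=\mathrm{RE}$. The Lipschitz bookkeeping you flag is also exactly what the paper carries out.

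There is, however, a genuine gap in the step where you assert that ``$\Lambda$ is a homeomorphism on the relevant compact sets, so it commutes with closure'' and hence that the pulled-back hierarchy intersects to $\cC_{qa}$. The map $\Lambda$ is \emph{not} a single map between two fixed spaces: the target contextuality scenario is determined by Alice's marginal $p_\tA$, so correlations in the same Bell scenario with different marginals land in different contextuality scenarios. For correctness of your algorithm you need that whenever $p\in\cC_{qa}$ the image $q=\Lambda(p)$ lies in $\overline{\cQ}$ of the scenario fixed by $p_\tA$ (otherwise some level $\cQ_n$ could falsely exclude $q$). A generic sequence $(p^n)\subset\cC_{qs}$ with $p^n\to p$ will have $p^n_\tA\neq p_\tA$, so the images $\Lambda(p^n)$ live in \emph{different} scenarios and cannot witness $q\in\overline{\cQ}$. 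The paper supplies the missing idea: take $p^{\mathrm{int}}(a,b|x,y)=p_\tA(a|x)/B_y$, show it lies in the relative interior of the local polytope, and set $p^n=\tfrac{1}{n}p^{\mathrm{int}}+(1-\tfrac{1}{n})p$. This sequence sits in the relative interior of $\cC_{qa}$ (hence in $\cC_{qs}$), converges to $p$, and crucially satisfies $p^n_\tA=p_\tA$ for all $n$, so all $\Lambda(p^n)$ inhabit the same contextuality scenario and $q^n\to q$ there. Your sketch does not contain this construction, and the ``homeomorphism commutes with closure'' claim does not produce it.

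A secondary omission: you invoke ``full-support marginals'' without justification, but the input to the weak-membership problem may have $p_\tA(a|x)\in\{0,1\}$, making the denominators in Eq.~\eqref{eq:btc} vanish and the Lipschitz constant blow up. The paper handles this by a preprocessing map $\tau$ (Lemma~\ref{lem:Cqa_small}) that removes deterministic inputs and zero-probability outcomes while preserving membership in $\cC_{qa}$; this step is not automatic and requires its own argument.
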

Note that the SDP hierarchy in Corollary \ref{cor:compute} could be replaced by any algorithm capable of verifying that a behaviour is $\varepsilon$ away from $\cQ$ (in $\ell_1$ distance) for all $\varepsilon>0$.

\prlparagraph{Conclusion and outlook}
We constructed an isomorphism between the set of quantum spatial correlations and \vicky{the} set of quantum contextual behaviours \vicky{from an indexed family of contextual scenarios}. This map allows us to \vicky{characterise quantum non-locality in terms of quantum contextuality,} translate results from Bell non-locality to generalised contextuality, and also raises questions \vicky{about} the limits of SDP hierarchies in contextuality scenarios (see the Introduction). \vicky{A} natural future research direction \vicky{would be} to investigate whether other results from Bell non-locality, such as self-testing \cite{SB20} and device-independent quantum key distribution \cite{DIQKD}, have analogs in contextuality scenarios that can be found via our construction. Lastly, one might attempt to generalise our map to multipartite Bell scenarios. One such natural generalisation remains a bijection between local and non-contextual, and between non-signalling and contextual sets in multipartite Bell scenarios. However, whether this map also preserve\vicky{s} quantumness remains unknown, with the existence of post-quantum steering~\cite{postquantum} posing an obstacle to generalising our argument.

\prlparagraph{Acknowledgements}
We thank Miguel Navascu\'es for details of the proof of the Schr\"odinger--HJW theorem in the infinite-dimensional case, and Anubhav Chaturvedi, Luke Mortimer and Gabriel Senno for fruitful discussions. This  project  has  received  funding  from  the  European  Union's  Horizon~2020  research and innovation programme under the Marie Sk\l{}odowska-Curie grant agreement No.~754510, the Government of Spain (FIS2020-TRANQI, Severo Ochoa CEX2019-000910-S), Fundaci\'o Cellex, Fundaci\'o Mir-Puig and Generalitat de Catalunya (CERCA, AGAUR SGR 1381).

{\color{white}\cite{navascues2012physical}}

\bibliography{context.bib}
\bibliographystyle{unsrt}

\appendix

\onecolumngrid

\vicky{\section{The map}\label{fullmap}
To specify the map in fully generality it will be useful for us to be able to describe Bell and contextuality scenarios in which the different measurements have different numbers of outcomes. Thus, we redefine our tuples that denote the scenarios as follows.

We use a tuple $(\bA,\bB,X,Y)$ to denote a two-party Bell scenario in which Alice (Bob) has $X$ ($Y$) inputs and given an input $x\in[X]$ ($y\in[Y]$) she (he) can obtain one of $A_x$ ($B_y$) possible outcomes, where $A_x$ ($B_y$) are the entries of the $X$-tuple  $\bA$ ($Y$-tuple $\bB$). We use the notation $\norm{\bA}=\sum_x A_x$ and $\norm{\bB}=\sum_y B_y$. 

Given a Bell scenario $(\bA,\bB,\cX,\cY)$, a correlation is given by a vector $p\in\R^{\norm{\bA}\norm{\bB}}$, with entries $p(a,b|x,y)$. A correlation $p$ is in the quantum set, $\cC_{qs}$, if there exist separable Hilbert spaces $\cH_\tA$ and $\cH_\tB$, positive-operator-valued measures (POVMs) $M^x=\{M^x_a\}_{a\in [A_x]}$ for all $x\in[\cX]$ on $\cH_\tA$ and $N^y=\{N^y_b\}_{b\in [B_y]}$ for all $y\in[\cY]$ on $\cH_\tB$, and a density operator (positive semidefinite operator with unit trace) $\rho$ on $\cH_\tA\otimes\cH_\tB$ such that
\begin{equation}
p(a,b|x,y)=\Tr\left(M^x_a\otimes N^y_b\rho\right).
\end{equation}
A correlation $p$ is in the no-signalling set if it satisfies the no-signalling constraints
\begin{gather}
\sum_b p(a,b|x,y) = \sum_b p(a,b|x,y') \quad \forall a,x,y,y' \\
\sum_a p(a,b|x,y) = \sum_a p(a,b|x',y) \quad \forall b,y,x,x' .
\end{gather}
A correlation $p$ is local if there exists a measurable space $(\Lambda,\Sigma)$, a probability measure $\mu:\Sigma\rightarrow[0,1]$, and local probability distributions $l^\tA(a|x,E)$ and $l^\tB(b|y,E)$ satisfying $\sum_a l^\tA(a|x,E)=\sum_b l^\tB(b|y,E)=1$ for all $x\in[\cX]$, $y\in[\cY]$ and non-empty $E\in\Sigma$, such that
\begin{equation}
p(a,b|x,y)=\int_\Lambda l^\tA(a|x,\lambda)l^\tB(b|y,\lambda)\mathrm{d}\mu(\lambda).
\end{equation} 

We identify a prepare-and-measure contextuality scenario with $\cX$ preparations satisfying equivalences $\mathcal{OE}_P$ and $\cY$ measurements where measurement $y \in [Y]$ has $B_y$ outcomes by the tuple $(\cX,\cY,\bB,\mathcal{OE}_P)$, where $\bB$ is a $\cY$-tuple with $y$th element $B_y$. 
%The values of the labels $x$ in a preparation equivalence will now play a crucial role in indexing contextuality scenarios: even if two sets of equivalence relations are the same up to relabelling of these indices, we will treat them as different. For example, we will distinguish the scenario with the equivalence relation $\frac12 P_1 + \frac12 P_2 \simeq \frac13 P_3 + \frac23 P_4$ from the scenario with the equivalence relation $\frac12 P_1 + \frac12 P_2 \simeq \frac13 P_4 + \frac23 P_5$, even though the only difference is the label assigned to the preparations.

A behaviour $q$ is in the set of \emph{contextual} behaviours if for every equivalence of the form 

\begin{equation}\label{eq:prep_equiv}
\sum_x \alpha_x P_x \simeq \sum_x \beta_x P_x,
\end{equation} in $\mathcal{OE}_P$ the behaviour satisfies
\begin{equation}
\sum_x \alpha_x q(b|x,y) = \sum_x \beta_x q(b|x,y) \quad \forall b,y.
\end{equation}

A behaviour $q$ is in the quantum set, $\cQ$, if there exists a separable Hilbert space $\cH$, POVMs $\{N^y_b\}_{b\in [B_y]}$ for $y\in[\cY]$ on $\cH$ satisfying $\mathcal{OE}_M$ and density operators $\rho_x$ on $\cH$ satisfying $\sum_x \alpha_x \rho_{x} = \sum_x \beta_x \rho_x$ for every equivalence of the form in Eq.~\eqref{eq:prep_equiv} in $\mathcal{OE}_P$ such that 
\begin{equation}
q(b|x,y)=\Tr(N^y_b\rho_x)\,.
\end{equation}

A behaviour $q$ is in the non-contextual set if there exists a measurable space $(\Lambda,\Sigma)$, probability measures $\mu_{x}: \Sigma \to [0,1]$ for all $x\in[\cX]$ satisfying $\sum_x \alpha_x \mu_{x}(E) = \sum_x \beta_x \mu_x(E)$ for every equivalence relation of the form \eqref{eq:prep_equiv} in $\mathcal{OE}_P$ and so-called \textit{response functions} $\xi_y(b|\cdot)$ for all $b\in[B_y]$ and $y\in[\cY]$ on $\Lambda$, and $\sum_b\xi(b|E)=1$ for all $E\in\Sigma$, such that 
\begin{equation}
q(b|x,y) = \int_\Lambda\xi_y(b|\lambda)\mathrm{d}\mu_{x}(\lambda)\,.
\end{equation}

Now we can define the map $\Gamma$ on the full set of non-signalling correlations. Notice that if there is an outcome $a|x$ that never occurs for Alice in the correlation $p$ then $p_\tA(a|x)=0$ and the corresponding preparation $P_{a|x}$ of Bob's system does not appear in the preparation equivalences $\cNS(p_\tA)$ (since it would have coefficient zero). Consequently, the choice of preparation $P_{a|x}$ would be unconstrained in both contextual and non-contextual theories. As a result the correlation $p$ could be mapped to one of many possibilities. Since we wish to map each non-signalling correlation to a single behaviour in a contextual scenario, we will not include these unconstrained preparations in the contextuality scenario.

A correlation $p$ in a Bell scenario $(\bA,\bB,\cX,\cY)$ is mapped to a behaviour $q$ in the contextuality scenario $(\norm{\cA^p},\cY,\bB,\cNS (p_\tA))$, where $\norm{\cA^{p}}=\sum_{x\in[\cX]}\abs{\cA^p_x}$ and $\cA^p_x = \{a : p_\text{A}(a|x)>0\}$ for all $x\in[\cX]$, and where the preparation equivalences are given by 
\begin{equation}\label{eq:nsequiv}
  \sum_{a\in[A_1]}p_\tA(a|1)P_{a|1}\simeq 
 \cdots \simeq\sum_{a\in[A_\cX]}p_\tA(a|\cX)P_{a|\cX},
\end{equation}
which we encode as a vector $\cNS (p_\tA)$ in the Cartesian product $\R^{A_1}\times\cdots\times\R^{A_X}$ where the $x$-th vector has $a$-th element $\paxax$.
If a preparation $P_{a|x}$ has coefficient zero we will say $P_{a|x}$ does not appear in $\cNS(p_\tA)$.

The mapping is given by
\begin{equation}\label{eq:Gam}
\Gamma:[p,(\bA,\bB,\cX,\cY)]\mapsto[q,\left(\norm{\cA^p},\cY,\bB,\cNS (p_\tA)\right)]
\end{equation}
where 
\begin{equation}\label{eq:btc}
q(b|[a|x],y) = \frac{ p(a,b|x,y) }{ p_\tA(a|x) }\,
\end{equation}
for $a\in\cA^p_x$, $b\in[B_y]$, $x\in[X]$ and $y\in[Y]$.

For the inverse mapping, let ${A}_x\in\N$ for all $x\in[X]$ and some $X\in\N$, then consider some $\hat{p}_\tA(a|x)\ge0$ for all $a\in[A_x]$ and $x\in[X]$ such that $\sum_{a\in{A}_x}\hat{p}_\tA(a|x)=1$ for all $x\in[X]$. Then, let $\cNS (\tpa)\in\R^{{A}_1}\times\cdots\times\R^{{A}_X}$ be such that the $a$-th element of the $x$-th vector is $\hat{p}_\tA(a|x)$. We can now define a contextuality scenario $(Z,\cY,\bB,\cNS (\tpa))$ where the number of preparations, $Z$, is the number of non-zero elements in all the vectors of $\cNS (\tpa)$. Here $\cNS (\tpa)$ encodes preparation equivalences as described in Eq.~\eqref{eq:nsequiv}. The inverse mapping $\Gamma^{-1}$ takes a behaviour $q$ in this contextuality scenario to a correlation $p$ in the Bell scenario $(\bA=({A}_1,\ldots,{A}_X),\bB,\cX,\cY)$ and is given by 
\begin{equation}
\Gamma^{-1}:[q,\big(Z ,\cY,\bB, \cNS (\tpa)\big)]
\mapsto[p,(\bA,\bB,\cX,\cY)]
\end{equation}
where 
\begin{equation} \label{eq:ctb}
p(a,b|x,y)=\tpa(a|x)q(b|[a|x],y)
\end{equation}
for $b\in[B_y]$, $x\in[\cX]$ and $y\in[\cY]$. When the scenarios are already specified\vicky{,} we will refer to the $q$ in Eq.~\eqref{eq:btc} as $\Gamma(p)$ and similarly, the $p$ in Eq.~\eqref{eq:ctb} as $\Gamma^{-1}(q)$.

If we are simply given a behaviour in a contextuality scenario of the right type\footnote{A scenario with no measurement equivalences and preparation equivalences given by multiple decompositions of one hypothetical preparation in which each preparation only appears once.} we have a choice of Bell scenario into which to map. This choice stems from being able to consider a correlation in a Bell scenario as a correlation from a larger scenario where some outcomes never occur. For example, consider a contextuality scenario with five preparations $Q_1,\ldots, Q_5$ and preparation equivalences $\frac12 Q_1+\frac12 Q_2\simeq\frac14 Q_3 +\frac34 Q_4\simeq Q_5$, and two measurements with two outcomes each. The simplest choice would be to map to a Bell scenario with three measurements for Alice, the first two having two outcomes each and the third having one (trivial) outcome. This choice corresponds to thinking of the preparations as $Q_1=P_{1|1}$, $Q_2=P_{2|1}$, $Q_3=P_{1|2}$, $Q_4=P_{2|2}$ and $Q_5=P_{1|3}$ which results from choosing $\cNS(\pax)=(1/2,1/2)\times(1/4,3/4)\times(1)$.

However, an equally valid choice is to take $\cNS(\pax)=(0,1/2,0,0,1/2)\times(1/4,3/4,0)\times(0,1)$ which means we would think of the preparations as $Q_1=P_{2|1}$, $Q_2=P_{5|1}$, $Q_3=P_{1|2}$, $Q_4=P_{2|2}$ and $Q_5=P_{2|3}$. The resulting Bell scenario is $\big((5,3,2),(2,2),3,2\big)$. In this case the behaviours from this contextuality scenario will map to the correlations in the Bell scenario with marginals given by the coefficients in the equivalences, e.g.~$p_\tA(1|2)=1/4$, and in which all Alice's outcomes that do not have a corresponding preparation never occur, e.g.~$p_\tA(1|1)=0$. It follows from the results of the present manuscript that if the image of a behaviour in the first Bell scenario is a local, quantum or non-signalling correlation, then the image in the second Bell scenario will also be local, quantum or non-signalling, respectively.}
\section{Quantum case---Contextuality to Bell}\label{sec:ctb}

In this section we will show that \vicky{$\Gamma^{-1}$} always takes a quantum behaviour $q$ \vicky{in a scenario $(Z ,\cY,\bB, \cNS (\tpa))$} to a quantum correlation $p$. In the case of the simplest Bell scenario some similar arguments are described in Refs.~\cite{leifer2013maximally, pusey2018robust}. Consider a contextuality scenario $(Z ,\cY,\bB,\cNS (\tpa))$ \vicky{as described in the previous section, where $\tpa(a|x)\ge0$ for $a\in[A_x]$ and $x\in[X]$ and $\cNS (\tpa)\in\R^{{A}_1}\times\cdots\times\R^{{A}_X}$ has $a$-th element in the $x$-th vector $\tpa(a|x)$ and} where $Z$ is the number of \vicky{strictly positive elements of the vectors} in $\cNS (\tpa)$. \vicky{We will also denote by $\hat{\cA}_x$ the subset of $[A_x]$ such that $\tpa(a|x)>0$.} Let $q$ be a quantum behaviour in this scenario with a realisation\vicky{, that is}
\begin{equation}
q(b|[a|x],y)=\Tr(\rho_{a|x}N^y_b)\,,
\end{equation}
for \vicky{some} density operators $\rho_{a|x}$ \vicky{for all $a\in\hat{\cA}_x$ and $x\in[X]$}, and \vicky{some} POVMs $\{N^y_b\}_b$ \vicky{for all $b\in[B_y]$ and $y\in[Y]$} on a Hilbert space $\cH$, where $\sum_{a\in\hat{\cA}_x}\tpa(a|x)\rho_{a|x}=\rho_\tB$  for all $x\in[\cX]$ and some density operator $\rho_\tB$. 
%Let $\widetilde{\rho}_{a|x}=\tpa(a|x)\rho_{a|x}$ and define $\rho_\tB=\sum_{a\in[\cA'_x]}\widetilde{\rho}_{a|x}$ for any (and hence all) $x\in[\cX]$. 

 We will show that the correlation 
\begin{equation}
p(a,b|x,y)=\tpa(a|x)q(b|[a|x],y)
\end{equation}
for \vicky{$a\in[A_x]$,} $b\in[B_y]$, $x\in[\cX]$ and $y\in[\cY]$ in the Bell scenario $(\bA,\bB,\cX,\cY)$ has a quantum realisation.

Now, we want to find POVMs $M^x=\{M^x_a\}_{a\in[A_x]}$ on a Hilbert space $\cH_\tA$ and a density operator $\rho$ on $\cH_\tA\otimes\cH_\tB$ such that if Alice measures the POVM $M^x$ on system \vicky{$\tA$} then with probability $\tpa(a|x)$ (from our operational equivalences \vicky{$\cNS (\tpa)$}) she sees outcome $a$ and Bob is left with the state $\rho_{a|x}$ (from our quantum realisation in the contextuality scenario) for $a\in\hat{\cA}_x$ \vicky{and outcome $M^x_a$ never occurs for $a\in[A_x]\backslash\hat{\cA}_x$}. In other words, we are looking for a way for Alice to steer Bob's system into the assemblage given by the states $\rho_{a|x}$. 

Mathematically, we want 
\begin{equation}\label{eq:verify}
\Tr_\tA\left(M^x_a\otimes\I\rho\right)=\begin{cases}
\tpa(a|x)\rho_{a|x}\quad&\text{ if }a\in\hat{\cA}_x\\
0\quad&\text{otherwise}\,,
\end{cases}
\end{equation} 
since then, if Alice measures \vicky{$M^x$} on system A and Bob measures $N^y$ (also from the quantum realisation in the contextuality scenario) on system B when the system AB is in state $\rho$ we find
\begin{equation}
p(a,b|x,y)=\Tr(M^x_a\otimes N^y_b \rho)=
\begin{cases}
\tpa(a|x)\Tr(N^y_b\rho_{a|x})=\tpa(a|x)q(b|[a|x],y)\quad&\text{ if }a\in\hat{\cA}_x\\0\vicky{=\tpa(a|x)q(b|[a|x],y)}\quad&\text{otherwise}\,,
\end{cases}
\end{equation}
%
%\begin{equation}
%\Tr(M^x_a\otimes N^y_b \rho)=\tpa(a|x)\Tr(N^y_b\rho_{a|x})=\tpa(a|x)q(b|[a|x],y)=p(ab|xy)\,,
%\end{equation}
and we have a quantum realisation for our Bell correlation. Our construction of \vicky{$M^x$} and $\rho$ is based on the Schr\"odinger--HJW theorem~\cite{kirk06}, in particular, on the infinite dimensional argument given by Navascu\'es~\vicky{et al.}~\cite [Lemma 4]{navascues2012physical}.

Since $\rho_\tB$ is a density operator, it has a spectral decomposition given by a countable sum
\begin{equation}
\rho_\tB=\sum_{n}\lambda_n\ket{n}\bra{n},
\end{equation}
where $\left\{\ket{n} : n\in\mathbb{N} \right\}$ is a set of orthonormal vectors of $\cH$, and the positive eigenvalues $\lambda_n$ of $\rho_\tB$ satisfy $\sum_{n}\lambda_n=1$. Note that we have excluded any zero eigenvalues from this decomposition. Let $\cH_\tB$ be the support of $\rho_\tB$ and $\Pi_\tB=\sum_{n}\ket{n}\bra{n}$ be the projection onto this closed subspace of $\cH$. Thus, we have that $\{\ket{n} : n\in\mathbb{N}\}$ forms an orthonormal basis for $\cH_\tB$ and $\Pi_\tB\rho_\tB\Pi_\tB=\rho_\tB$. Furthermore,  we have $\Pi_\tB\rho_{a|x}\Pi_\tB=\rho_{a|x}$ for all $a\in\hat{\cA}_x$ and $x\in[\cX]$ since we have $1=\Tr(\Pi_\tB\rho_\tB\Pi_\tB)=\sum_{a'\in\hat{\cA}_x}\tpa(a'|x)\Tr\left(\Pi_\tB\rho_{a'|x}\Pi_\tB\right)$ which implies $\Tr(\Pi_\tB\rho_{a|x}\Pi_\tB)=1$ and therefore $\Pi_\tB\rho_{a|x}\Pi_\tB=\rho_{a|x}$ for all $a\in\hat{\cA}_x$.

First, we define the state for the realisation of the quantum behaviour $p$. Let $\cH_\tA=\cH_\tB$, then the state is given by $\ket{\Psi}=\sum_{n}\sqrt{\lambda_n}\ket{n}\ket{n}\in\cH_\tA\otimes\cH_\tB$. This series converges since $\sum_{n}\norm{\sqrt{\lambda_n}\ket{n}\ket{n}}^2=\sum_{n}\lambda_n=1$.

Second, we define the effects $M^x_a$ for $a\in\hat{\cA}_x$ of the POVM in the quantum realisation of $p(a,b|x,y)$. 
%Define 
%\begin{equation}
%M^x_a=\left[\sum_{m=1}^\infty\frac{1}{\sqrt{\lambda_m}}\ket{m}\bra{m}\right]p_\tA(a|x)\rho_{a|x}^T\left[\sum_{n=1}^\infty\frac{1}{\sqrt{\lambda_n}}\ket{n}\bra{n}\right],
%\end{equation}
%where the transpose is taken in the eigenbasis of $\rho_\tB$. We will show these operators are well-defined, bounded operators on $\cH_\tA$. 
To do so, we define the operators
\begin{equation} 
M_{a,K}^x=\left[\sum_{m=1}^K\frac{1}{\sqrt{\lambda_m}}\ket{m}\bra{m}\right]\tpa(a|x)\rho_{a|x}^T\left[\sum_{n=1}^K\frac{1}{\sqrt{\lambda_n}}\ket{n}\bra{n}\right]
\end{equation}
on $\cH_\tA$, where the transpose is taken in the eigenbasis of $\rho_\tB$, and $K\in\mathbb{N}$. From now on, we assume that $\cH_\tA$ is infinite dimensional, since the finite-dimensional case is simpler. We will show that the operator $M^x_a$ on $\cH_\tA$ given by $M^x_a\ket{\psi}=\lim_{K\to\infty}M_{a,K}^x\ket{\psi}$ is a bounded linear operator on $\cH_\tA$ for all $a\in\hat{\cA}_x$ and $x\in[\cX]$, and these operators will form the effects of the POVMs in the quantum realisation. 

We have $M_{a,K}^x\geq 0$ for all $K\in\mathbb{N}$ since $M_{a,K}^x=S^\dagger \tpa(a|x)\rho_{a|x}^TS$, where $S=\sum_{m=1}^K\frac{1}{\sqrt{\lambda_m}}\ket{m}\bra{m}$, and $\tpa(a|x)\rho_{a|x}^T$ is positive semidefinite (since transposition is a positive map). Further, we have that $M_{a,K}^x\leq \I$, since
\begin{equation}
\begin{aligned}
\sum_aM_{a,K}^x&=\left[\sum_{m=1}^K\frac{1}{\sqrt{\lambda_m}}\ket{m}\bra{m}\right]\rho^T_\tB\left[\sum_{n=1}^K\frac{1}{\sqrt{\lambda_n}}\ket{n}\bra{n}\right]
\\&=\left[\sum_{m=1}^K\frac{1}{\sqrt{\lambda_m}}\ket{m}\bra{m}\right]\sum_{j=1}^\infty\lambda_j\ket{j}\bra{j}\left[\sum_{n=1}^K\frac{1}{\sqrt{\lambda_n}}\ket{n}\bra{n}\right]
\\&=\sum_{m=1}^K\ket{m}\bra{m}\leq\I.
\end{aligned}
\end{equation}
Thus, $M_{a,K}^x$ is a positive semidefinite bounded linear operator on $\cH_\tA$ for all $K\in\mathbb{N}$, $a\in\hat{\cA}_x$, and $x\in[\cX]$. 

Consider the vector subspace $\cF$ of $\cH_\tA$ consisting of finite linear combinations of the eigenvectors of $\rho_\tB$, i.e.~vectors of the form $\sum_{n=1}^L c_n\ket{n}$, for some $L\in\mathbb{N}$.
%(where $\left\{\ket{n}\right|n\in\mathbb{N}\}$ has been extended to an orthonormal basis)
\vicky{The limit $\lim_{K\to\infty}M_{a,K}^x\ket{\psi}$ converges in $\cH_\tA$ f}or every element $\ket{\psi}$ of the subspace $\cF$\vicky{, since} there exists some $K_\psi \in\mathbb{N}$ such that \vicky{$\lim_{K\to\infty}M_{a,K}^x\ket{\psi}=M_{a,K_\psi}^x\ket{\psi}=M^x_a\ket{\psi}$}. Thus, we find that on $\cF$, $\norm{M^x_a\ket{\psi}} = \norm{M_{a,K_\psi}^x\ket{\psi}}\leq \norm{\ket{\psi}}$ and, hence, $M^x_a$ is a bounded linear operator on $\mathcal{F}$. Since $\cF$ is a dense subspace of $\cH_\tA$, we have that $M^x_a$ has a unique linear extension to $\cH_\tA$. This extension is defined as follows: given a vector $\ket{\psi}\in\cH_\tA$ let $(\ket{\psi_j})_{j\in\mathbb{N}}$ be a sequence in $\cF$ such that $(\ket{\psi_j})_j\to\ket{\psi}$ as $j\to\infty$. Then we define $M^x_a\ket{\psi}=\lim_{j\to\infty}M^x_a\ket{\psi_j}$. 
This limit exists since the sequence $(M^x_a\ket{\psi_j})_{j}$ is Cauchy which can be seen by the following argument. The sequence $(\ket{\psi_j})$ is Cauchy, therefore for every $\epsilon>0$ there exists $N\in\mathbb{N}$ such that $\norm{\ket{\psi_m}-\ket{\psi_n}}<\epsilon$ for all $m,n>N$. Thus, we have that $\norm{M^x_a\ket{\psi_m}-M^x_a\ket{\psi_n}}\leq\norm{\ket{\psi_m}-\ket{\psi_n}}<\epsilon$ since $\ket{\psi_m}-\ket{\psi_n}\in\mathcal{F}$.

Finally, we set $M^x_a=0$ for the remaining values of $a$, i.e.~for \vicky{$a\in[A_x]\backslash\hat{\cA}_x$} and verify that Eq.~\eqref{eq:verify} holds, that is, we have a quantum realisation of our correlation $p(a,b|x,y)$, given by the POVMs $M^x=\{M^x_a\}$ on $\mathcal{H}_\tA$ for Alice and $N^y=\{N^y_b\}$ on $\cH_\tB$ for Bob and the quantum state $\ket{\Psi}=\sum_{n}\sqrt{\lambda_n}\ket{n}\ket{n}\in\cH_\tA\otimes\cH_\tB$. Clearly, for $a\in[\cA_x]\backslash\hat{\cA}_x$ we have $\Tr_\tA(M^x_a\otimes\I\ket{\Psi}\bra{\Psi})=0$. Then, for $a\in\hat{\cA}_x$ we have
\begin{equation}
\begin{aligned}
\Tr_\tA(M^x_a\otimes\I\ket{\Psi}\bra{\Psi})&=\sum_{j,m,n}\sqrt{\lambda_m\lambda_j}\bra{j}\otimes\I\left( M^x_a\otimes\vicky{\I}\ket{m}\bra{n}\otimes\ket{m}\bra{n}\right)\ket{j}\otimes\I\\
&=\sum_{j,m}\sqrt{\lambda_m\lambda_j}\bra{j}\lim_{K\to\infty}M^x_{a,K}\ket{m}\ket{m}\bra{j}\\
&=\sum_{j,m}\sqrt{\lambda_m\lambda_j}\sum_{k=1}^\infty\frac{p_\tA(a|x)}{\sqrt{\lambda_m\lambda_k}}\langle j|k\rangle\bra{n}\rho^T_{a|x}\ket{m}\ket{m}\bra{j}\\
&=\sum_{j,m}\tpa(a|x)\bra{m}\rho_{a|x}\ket{j}\ket{m}\bra{j}=\tpa(a|x)\rho_{a|x}\,.
\end{aligned}
\end{equation}

\section{Quantum case---Bell to contextuality}\label{sec:btc}

Conversely, consider a quantum correlation, $p$, from a bipartite Bell scenario $(\bA,\bB,\cX,\cY)$ given by
\begin{equation}
p(a,b|x,y)=\Tr(M^x_a\otimes N^y_b\rho)\,,
\end{equation}
for some POVMs $M^x=\{M^x_a\}$ on a Hilbert space $\cH_\tA$ and $N^y=\{N^y_b\}$ on a Hilbert space $\cH_\tB$ and a density operator $\rho$ on $\cH_\tA\otimes\cH_\tB$. Denote Alice's marginal probabilities by $p_\tA(a|x)=\Tr(M^x_a\otimes\I\rho)$ and Bob's reduced states after outcome $a$ of measurement $x$ of Alice with $p_\tA(a|x)\neq 0$ by
\begin{equation}
\rho_{a|x}=\frac{\Tr_\tA(M^x_a\otimes\I\rho)}{p_\tA(a|x)}\,. 
\end{equation}

\vicky{Recalling that we denote the subset of $[A_x]$ such that $\paxax>0$ by $\cA^p_x$, w}e have that $\sum_{a\in\cA^p_x}p_\tA(a|x)\rho_{a|x}=\Tr_\tA(\rho)$ for all $x\in[\cX]$, therefore the density operators $\rho_{a|x}$ satisfy the equivalences $\cNS (p_\tA)$ \vicky{given in Eq.~\eqref{eq:nsequiv}}. Taking $\rho_{a|x}$ as preparation $P_{a|x}$ and $N^y$ as the $y$-th measurement in the contextuality scenario $\left(\norm{\cA^p},\cY,\bB,\cNS (p_\tA)\right)$ results in the behaviour
\begin{equation}
q(b|[a|x],y)=\Tr(N^y_b\rho_{a|x})=\frac{\Tr(M^x_a\otimes N^y_b\rho)}{p_\tA(a|x)}= \frac{p(a,b|x,y)}{p_\tA(a|x)} \,.
\end{equation}
Thus, $q$ is a quantum behaviour in the contextuality scenario $\left(\norm{\cA^p},\cY,\bB,\cNS (p_\tA)\right)$.

\section{Local and non-contextual case}\label{sec:ltnc}
Let $p(a,b|x,y)$ be a local correlation in a Bell scenario $(\bA,\bB,\cX,\cY)$. Then, there exists a measurable space $(\Lambda,\Sigma)$, a probability measure $\mu:\Sigma\rightarrow[0,1]$ and local probability distributions $l^\tA(a|x,E)$ and $l^\tB(b|y,E)$ satisfying $\sum_al^\tA(a|x,E)=\sum_bl^\tB(b|y,E)=1$ for all $x\in[\cX]$, $y\in[\cY]$ and non-empty $E\in\Sigma$ such that 
\begin{equation}
p(a,b|x,y)=\int_\Lambda l^\tA(a|x,\lambda)l^\tB(b|y,\lambda)\mathrm{d}\mu(\lambda).
\end{equation} 
We now construct a non-contextual ontological model that yields the behaviour 
\begin{equation}
q(b|[a|x],y)= \frac{ p(a,b|x,y) }{ p_\tA(a|x) }
\end{equation}
in the contextuality scenario $\left(\norm{\cA^p},\cY,\bB,\cNS (p_\tA)\right)$. Note that the marginals of Alice, $p_\tA(a|x)=\sum_b p(a,b|x,y)$, can now be expressed as $p_\tA(a|x)=\int_\Lambda l^\tA(a|x,\lambda)\mathrm{d}\mu(\lambda)$. 

We select the ontic state space $(\Lambda,\Sigma)$ and each preparation $P_{a|x}$ for $a\in\cA^p_x$ and $x\in[\cX]$ is given by the measure
\begin{equation}
\mu_{a|x}(E)= \frac{l^\tA(a|x,E)\mu(E)}{p_\tA(a|x)}
\end{equation}
on $\Lambda$. These are indeed probability measures on $\Lambda$, since
\begin{equation}
\int_{\Lambda}\mathrm{d}\mu_{a|x}(\lambda)=\frac{1}{p_\tA(a|x)}\int_{\Lambda}l^\tA(a|x,\lambda)\mathrm{d}\mu(\lambda)=1\,.
\end{equation}

Furthermore, the measures $\mu_{a|x}$ satisfy the operational equivalences $\cNS (p_\tA)$, since
\begin{equation}
\sum_ap_\tA(a|x)\mu_{a|x}(E)=\mu(E),
\end{equation}
for all $E\in\Sigma$ and $x\in[\cX]$. 

The response function for each measurement $M^y$ for $y\in[\cY]$ is given by $\xi_y(b|E)=l^\tB(b|y,E)$. Now, for all $a\in\cA^p_x$ and $x\in[\cX]$ we find that
\begin{equation}
q(b|[a|x],y)=\int_{\Lambda}\xi_y(b|\lambda)\mathrm{d}\mu_{a|x}(\lambda)=\int_{\Lambda} l^\tB(b|y,\lambda)\frac{l^\tA(a|x,\lambda)}{p_\tA(a|x)}\mathrm{d}\mu(\lambda)=\frac{p(a,b|x,y)}{p_\tA(a|x)}\,.
\end{equation} 

Conversely, consider a non-contextual behaviour $q(b|[a|x],y)$ in a contextuality scenario $\left(Z ,\cY,\bB,\cNS (\tpa)\right)$ \vicky{where $\tpa(a|x)\ge0$ for $a\in[A_x]$ and $x\in[X]$ and $\cNS (\tpa)\in\R^{{A}_1}\times\cdots\times\R^{{A}_X}$ has $a$-th element in the $x$-th vector $\tpa(a|x)$ and where $Z$ is the number of strictly positive elements of the vectors in $\cNS (\tpa)$. We will also denote by $\hat{\cA}_x$ the subset of $[A_x]$ such that $\tpa(a|x)>0$.}
%Let $\cX$ be the number of equivalences in $\cNS (\tpa)$, $\hat{\cA}_x$ be the set of values of $a$ for which $P_{a|x}$ appears in $\cNS (\tpa)$ for each $x\in[\cX]$ and $\cA$ be any $\cX$-tuple such that $\cA_x\geq\max \hat{\cA}_x$.

Then, there exists a measurable space $(\Lambda,\Sigma)$, probability measures $\mu_{a|x}$ for all $a\in\hat{\cA}_x$ and $x\in[\cX]$ and $\xi_y(b|\cdot)$ for all $b\in[B_y]$ and $y\in[\cY]$ on $\Lambda$ satisfying $\sum_a\tpa(a|x)\mu_{a|x}(E)=\mu(E)$ for all $x\in[\cX]$ and $\sum_b\xi(b|E)=1$ for all $E\in\Sigma$ such that
\begin{equation}
\int_\Lambda\xi_y(b|\lambda)\mathrm{d}\mu_{a|x}(\lambda)=q(b|[a|x],y)\,.
\end{equation}
Note that we can assume that $\mu(E)>0$ for all non-empty $E\in\Sigma$. 

We will construct a local hidden variable model for the correlation
\begin{equation}
p(a,b|x,y)=\begin{cases}\tpa(a|x)q(b|[a|x],y)\quad&\text{ if }a\in\hat{\cA}_x\\
0\quad&\text{otherwise}\,,\end{cases}
\end{equation}
for $b\in[B_y]$, $x\in[\cX]$ and $y\in[\cY]$. Let
\begin{equation}
l^\tA(a|x,E)=\begin{cases}
\frac{\tpa(a|x)\mu_{a|x}(E)}{\mu(E)}\qquad &\text{if }a\in\hat{\cA}_x\text{ and }E\text{ non-empty } \\
0&\text{otherwise,}
\end{cases}
\end{equation}
and $l^\tB(b|y,E)=\xi_y(b|E)$. Then, for $a\in\hat{\cA}_x$, we find
\begin{equation}
\begin{aligned}
p(a,b|x,y)&=\int_{\Lambda}l^\tA(a|x,\lambda)l^\tB(b|y,\lambda)\mathrm{d}\mu(\lambda)\\
&=\int_{\Lambda}\tpa(a|x)\xi_y(b|\lambda)\mathrm{d}\mu_{a|x}(\lambda)=\tpa(a|x)q(b|[a|x],y)\,,
\end{aligned}
\end{equation}
and for $a\in[A_x]\backslash\hat{\cA}_x$, we have $p(a,b|x,y)=0$.

\section{Non-signalling and contextual case}\label{sec:nstc}

Given a non-signalling correlation $p$ in a Bell scenario $(\bA,\bB,\cX,\cY)$ we find that $q$ defined by Eq.\vicky{~\eqref{eq:btc}} is in the set of contextual behaviours in the contextuality scenario $\left(\norm{\cA^p},\cY,\bB,\cNS (p_\tA)\right)$ since
\begin{equation}
\sum_{a\in\cA^p_x}p_\tA(a|x)q(b|[a|x],y)=\sum_{a\in\cA^p_x}p(a,b|x,y)=\sum_{a\in[\vicky{A}_x]}p(a,b|x,y)=p_\tB(b|y)\,,
\end{equation}
for all $b\in[B_y]$\vicky{, $x\in[X]$} and $y\in[\cY]$.

Conversely, given a behaviour in the contextual set of a scenario $\left(Z ,\cY,\bB,\cNS (\tpa)\right)$ we find that the correlation $p$ in Eq.~\vicky{~\eqref{eq:ctb}} is non-signalling in the Bell scenario $(\bA,\bB,\cX,\cY)$ since
\begin{equation}
\sum_{a\in[A_x]}p(a,b|x,y)=\sum_{a\in\hat{\cA}_x}q(b|[a|x],y)\tpa(a|x)=Q(b|y)\,,
\end{equation}
for some $Q(b|y)\in\mathbb{R}$ \vicky{(recalling that $\hat{\cA}_x$ denotes the subset of $[A_x]$ such that $\tpa(a|x)>0$)}, and
\begin{equation}
\sum_{b\in[B_y]}p(a,b|x,y)=\sum_{b\in[B_y]}q(b|[a|x],y)\tpa(a|x)=\tpa(a|x)\,,
\end{equation}
for all for all $b\in[B_y]$ and $y\in[\cY]$.

\section{Limitations of the map}\label{sec:counterapp}
Preparation equivalences in the form of Eq.~(14) of the main text generally may involve a single preparation appearing in multiple mixtures, for example, see how preparation $P_1$ appears in all three mixtures in Eq.~\eqref{eq:notbell} below. Such equivalences do not arise from the no-signalling constraint in a remote-preparation scenario. In this section we will demonstrate with an explicit example how treating the multiple instances of a single preparation as different preparations in order to apply our map can result in local correlations being mapped to contextual behaviours. 

Preparation equivalences do not have one unique expression. For example, the equivalence
\begin{equation}\label{eq:notbell}
\frac12 (P_1+P_2)\simeq\frac13 (P_1+P_3+P_4)\,,
\end{equation}
can be expressed as 
\begin{equation}\label{eq:yesbell}
\frac14 P_1+ \frac34 P_2\simeq\frac12 (P_3+P_4)\,.
\end{equation}
Relabelling these preparations can now yield an equivalence in the form \vicky{in Eq.~\eqref{eq:nsequiv}}, since each preparation only appears once.
In fact, any single equivalence $\sum_{a=1}^{Z }p_{a,1}P_a\simeq\sum_{a=1}^{Z }p_{a,2}P_a$ can be expressed such that each preparation only features once, like in the case of Eqs.~\eqref{eq:notbell} and~\eqref{eq:yesbell}. This rearrangement is achieved by subtracting $p_{a,x}P_a$ from both sides and renormalising for all $a$, where $x\in\{1,2\}$ is such that $p_{a,x}=\min\{p_{a,1},p_{a,2}\}$.

Performing this procedure for each of the individual equivalences in Eq.~(14) \vicky{of the main text} will in general lead to equivalences given by decompositions of multiple different hypothetical preparations. For example,
\begin{equation}\label{eq:counter}
\mathcal{OH}=\Bigg\{\frac12 (P_1+P_2)\simeq\frac13 (P_1+P_3+P_4) \simeq\frac15 P_1+ \frac25 (P_3+P_5)\Bigg\}
\end{equation}
becomes
\begin{equation}
\begin{aligned}
\frac14 P_1+ \frac34 P_2&\simeq\frac12 (P_3+P_4)\,\text{ and }\\
\frac38 P_1+\frac58 P_2&\simeq\frac12 (P_3+P_5)\,.
\end{aligned}
\end{equation}
%
%\mate{to re-draw}
%Each of the rearrangements of Eq.~\eqref{eq:counter} that can be expressed as one sequence of equivalences as in Eq.~\eqref{eq:onehyp} is represented by a point on the dashed line in Fig.~\ref{fig:equiv}. This point corresponds to the hypothetical preparation, e.g. Eq.~\eqref{eq:counter} is the centre of the circle and $\frac38 P_1+\frac58 P_2\simeq \frac16 P_1+\frac{5}{12}(P_3+P_4)\simeq\frac12 (P_5+P_6)$ is the solid end point of the line. In each case the same preparation appears multiple times.

%\begin{figure}
%\includegraphics[scale=0.5]{equivalences}
%\caption{\label{fig:equiv}}
%\end{figure}

Consider then the contextuality scenario $H=(5,2,2,\mathcal{OH})$, where the preparation equivalences \vicky{$\mathcal{OH}$} are given in Eq.~\eqref{eq:counter}. By considering multiple instances of a repeated preparation, e.g.~$P_3$, as distinct preparations, we can embed the behaviours from the scenario $H$ in a scenario $H'$ which can be mapped to a Bell scenario. To do so, we first subtract $\frac15 P_1$ from each hypothetical preparation and renormalise to arrive at the relations
\begin{equation}
\frac{3}{8} P_1 + \frac58 P_2  \simeq \frac16 P_1 + \frac{5}{12}( P_3 + P_4 ) \simeq \frac12( P_3 + P_5 )
\end{equation}
Next, we can treat the two instances of $P_1$ and the two instances of $P_3$ as different preparations (re-interpreting the second instance of $P_1$ as $P_6$ and the second instance of $P_3$ as $P_7$) to embed the behaviours into $H'=(7,2,2,\mathcal{OH}')$ where
\begin{equation}
\mathcal{OH}'=\Bigg\{\frac{3}{8} P_1 + \frac58 P_2  \simeq \frac16 P_6 + \frac{5}{12}( P_3 + P_4 ) \simeq \frac12( P_7 + P_5 )\Bigg\}\,.
\end{equation}
Under relabelling, these preparation equivalences are of the form $\cNS (p_\tA)$. Explicitly, we can map a behaviour $q$ in the scenario $H$ to a behaviour $q'$ in the scenario $H'$ by setting $q'(b|x,y)=q(b|x,y)$ for $1\leq x\leq 5$, and $q'(b|6,y)=q(b|1,y)$ and $q'(b|7,y) = q(b|3,y)$. However, although under this embedding non-contextual and quantum behaviours remain non-contexual and quantum, respectively, it is possible for a contextual behaviour to become non-contextual due to the relaxation of preparation equivalences.

Indeed, we now give an explicit example of a contextual behaviour in $H$ that becomes a non-contextual behaviour in $H'$ using the above notation and mapping. Using the procedure in Ref.~\cite{PhysRevA.97.062103} and the vertex enumeration software package \texttt{lrs}~\cite{lrs}, we found all the facet inequalities defining the non-contextual polytope in $H$. The polytope has 60 facets, one of which is given by the inequality
\begin{equation}\label{eq:ineq_5_prep}
-q(1|1,2) -3 q(1|2,1) + 2 q(1|3,1) + 2 q(1|3,2) + 2 \ge 0.
\end{equation}
An explicit contextual behaviour violating this inequality is given by
\begin{equation}\label{eq:p_5_prep}
\begin{split}
q_c & \left. = \big( q_c(1|1,1), q_c(1|1,2), q_c(1|2,1), q_c(1|,2,2), \ldots, q_c(1|5,2) \big) \right. \\
& \left. = \left( \frac{19}{200},\frac{1}{2},\frac{127}{200},\frac{1}{2},\frac{19}{200},\frac{19}{200},\frac{181}{200},\frac{181}{200},\frac{77}{100},\frac{181}{200},\frac{19}{200},\frac{1}{2},\frac{19}{200},\frac{19}{200} \right)\,, \right.
\end{split}
\end{equation}
where $q_c(2|x,y)=1-q_c(1|x,y)$. In particular, this behaviour violates Eq.~\eqref{eq:ineq_5_prep} by $-\frac{1}{40}$, and thus it is contextual.

Let us now map $q_c$ above to a behaviour $q'_c$ in the scenario $H'$, that is, we have
\begin{equation}\label{eq:p_5_prep_big}
q'_c = \left( \frac{19}{200},\frac{1}{2},\frac{127}{200},\frac{1}{2},\frac{19}{200},\frac{19}{200},\frac{181}{200},\frac{181}{200},\frac{77}{100},\frac{181}{200},\frac{19}{200},\frac{1}{2},\frac{19}{200},\frac{19}{200}, \frac{19}{200},\frac{1}{2}, \frac{19}{200},\frac{19}{200} \right).
\end{equation}
This behaviour is non-contextual, which can be shown by constructing an explicit non-contextual model, that is, a measurable space $(\Lambda, \Sigma)$, probability measures $\mu_x: \Sigma \to [0,1]$ and $\xi_y(b|.)$ response functions such that
\begin{equation}\label{eq:q_c_model}
q'_c(b|x,y) = \int_\Lambda \xi_y(b|\lambda) \mathrm{d}\mu_x(\lambda) \quad \forall b,x,y.
\end{equation}
Let us take the discrete measurable space $\Lambda = \{1,2,3,4\}$ with the usual $\sigma$-algebra $\Sigma$ of the power sets. We define the response functions
\begin{gather}
\xi_1(1|1) = \xi_1(1|3) = 0, \quad \xi_1(1|2) = \xi_1(1|4) = 1 \\
\xi_2(1|1) = \xi_2(1|2) = 0, \quad \xi_2(1|3) = \xi_2(1|4) = 1
\end{gather}
with $\xi_y(2|\lambda) = 1 - \xi_y(1|\lambda)$ for all $y$ and $\lambda$. Furthermore, we define the probability measures via their values $\mu_x(\{\lambda\})$ given by
\begin{equation}
\mu =
\left(
\def\arraystretch{1.5}
\begin{array}{ccccccc}
 \frac{1}{2} & \frac{73}{200} & \frac{1671}{2000} & 0 & 0 & \frac{81}{200} & \frac{133}{160} \\
 \frac{81}{200} & 0 & \frac{139}{2000} & \frac{19}{200} & \frac{23}{100} & \frac{1}{2} & \frac{59}{800} \\
 0 & \frac{27}{200} & \frac{139}{2000} & \frac{19}{200} & \frac{19}{200} & \frac{19}{200} & \frac{59}{800} \\
 \frac{19}{200} & \frac{1}{2} & \frac{51}{2000} & \frac{81}{100} & \frac{27}{40} & 0 & \frac{17}{800} \\
\end{array}
\right),
\end{equation}
where the rows are indexed by $\lambda$ and the columns are indexed by $x$, i.e.~$\mu_{\lambda,x}=\mu_x(\{\lambda\})$. It is a straightforward computation to verify Eq.~\eqref{eq:q_c_model} with these choices. Thus, the contextual behaviour $q_c$ in the scenario $H$ is mapped to a non-contextual behaviour $q'_c$ in the scenario $H'$.

\section{Proof of Corollary 1}\label{sec:proof_membership}

Suppose there exists an algorithm to decide whether any behaviour belongs to the quantum set in any given contextuality scenario. Then, given any correlation $p$ in a Bell scenario $(\bA,\bB,\cX,\cY)$ one can decide whether $q$ given by Eq.\vicky{~\eqref{eq:btc}} belongs to the quantum set in the contextuality scenario $(\norm{\cA^p},\cY,\bB,\cNS (p_\tA))$. Since $q\in\cQ$ if and only if $p\in\cC_{qs}$, one could therefore decide the membership problem for the set of quantum spatial correlations, however this problem is known to be undecidable~\cite{ji2020mip,slofstra_2019}. 

\section{Proof of Corollary 2}\label{sec:proof_finite-dim}

 If any behaviour in $\cQ$ in any contextuality scenario could be realised with finite dimensional quantum systems, then the construction in Sec.~\ref{sec:ctb} would give a finite dimensional quantum realisation of any correlation in a Bell scenario, which is known not to exist~\cite{coladangelo2020inherently}.

\section{Proof of Corollary 3}\label{sec:proof_nclosed}

For the proofs of Corollaries~3 and~4, we will find it useful to remove probability zero or one outcomes of Alice from a correlation. To do so we will map a given correlation in a Bell scenario to one in a scenario with fewer inputs and/or outputs in which Alice's marginal probabilities are strictly between zero and one. We now describe this map and show how it preserves the closure $\cC_{qa}$ of the set of quantum spatial correlations $\cC_{qs}$.  

Let $\hat{p}\in\cC_{qa}$ be a correlation in a Bell scenario $(\bA,\bB,\cX,\cY)$ such that $\hat{p}_\tA(a|x)=0$ for all $a> a'_x$ for each $x\in[\cX]$ (note that if there are some zeroes in Alice's marginals, we can always relabel Alice's outcomes such that these zeroes appear at $a > a'_x$, since relabelling is a symmetry of $\cC_{qa}$). Furthermore, let Alice's outcomes be completely deterministic for inputs $x>X'$. Since $\hat{p}\in\cC_{qa}$, there exists a sequence of correlations $(\hat{p}_j)_j\subset\cC_{qa}$ with finite dimensional quantum realisations such that $\hat{p}_j\to \hat{p}$ as $j\to\infty$, i.e, for each $j\in\mathbb{N}$ we have 
\begin{equation}\label{eq:seq}
\hat{p}_j(a,b|x,y)=\bra{\psi_j}M^{x,j}_a\otimes N^{y,j}_b\ket{\psi_j}
\end{equation}
for some separable Hilbert spaces $\cH^j_\tA$ and $\cH^j_\tB$, unit vectors $\ket{\psi_j} \in\cH_\tA^j\otimes\cH_\tB^j$ and projective measurements $\{M^{x,j}_a\}$ and $\{N^{y,j}_b\}$ on $\cH^j_\tA$ and $\cH^j_\tB$, respectively. 

Now, consider the Bell scenario $(\bA',\bB,\cX',\cY)$ in which we have removed all of Alice's inputs that give a deterministic outcome in $\hat{p}$, i.e.~$x>X'$ and all the outputs $a|x$ for Alice such that $\hat{p}_\tA(a|x)=0$, i.e.~$a > a'_x$. Therefore, $\bA'$ has elements $A'_{x}=a'_x$. Define a correlation $\tau(\hat{p})=p'$ in this scenario by $p'(a,b|x,y)=\hat{p}(a,b|x,y)$ for all $a\in[A'_x]$, $b\in[B_y]$, $x\in[\cX']$ and $y\in[\cY]$. 

\begin{lem}\label{lem:Cqa_small}
A correlation $\hat{p}$ is in the set $\cC_{qa}$ of a Bell scenario $(\bA,\bB,\cX,\cY)$ if and only if $p'=\tau(\hat{p})$ is a correlation in the set $\cC_{qa}$ of the Bell scenario $(\bA',\bB,\cX',\cY)$
\end{lem}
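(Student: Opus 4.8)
The plan is to prove both directions of the equivalence by explicitly converting quantum realisations in one scenario into quantum realisations in the other, and then take limits — since $\cC_{qa}$ is by definition the closure of $\cC_{qs}$, it suffices to argue at the level of $\cC_{qs}$ (finite-dimensional realisations) provided the conversion is continuous. I will use throughout that $\hat p$ has the special normal form assumed before the lemma: for each $x$, $\hat p_\tA(a|x)=0$ for $a>a'_x$, and for $x>X'$ Alice's outcome is deterministic.

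\textbf{The easy direction ($\hat p\in\cC_{qa}\Rightarrow p'\in\cC_{qa}$).} Given the sequence $(\hat p_j)_j\subset\cC_{qs}$ with projective realisations as in Eq.~\eqref{eq:seq}, I would simply \emph{restrict} each realisation. For the inputs $x\le X'$ I group the deleted high outcomes into the last retained outcome: define $\tilde M^{x,j}_a = M^{x,j}_a$ for $a<A'_x$ and $\tilde M^{x,j}_{A'_x} = \sum_{a\ge A'_x} M^{x,j}_a$; this is still a valid POVM. For $x>X'$ I just drop the input entirely. Then $p'_j(a,b|x,y):=\bra{\psi_j}\tilde M^{x,j}_a\otimes N^{y,j}_b\ket{\psi_j}$ lies in $\cC_{qs}$ of the reduced scenario, and $p'_j\to p'$ because $\hat p_{j,\tA}(a|x)\to\hat p_\tA(a|x)=0$ for the collapsed outcomes $a'_x<a\le A_x$, so the collapsed-outcome probabilities vanish in the limit and $p'_j(a,b|x,y)\to\hat p(a,b|x,y)=p'(a,b|x,y)$ for the retained $a$. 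Hence $p'\in\cC_{qa}$.

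\textbf{The harder direction ($p'\in\cC_{qa}\Rightarrow\hat p\in\cC_{qa}$).} Here I must \emph{re-inflate}: take a sequence $(p'_j)_j\subset\cC_{qs}$ converging to $p'$ with finite-dimensional projective realisations on $\cH^j_\tA\otimes\cH^j_\tB$, and build realisations of correlations $\hat p_j\to\hat p$. The deterministic inputs $x>X'$ are trivial — append them with a rank-one/identity assignment (the outcome prescribed by $\hat p$), which costs nothing since they factor out of the tensor product. The genuine issue is the deleted zero-probability outcomes $a>a'_x$: I must manufacture extra POVM elements $M^{x,j}_a$ for $a'_x<a\le A_x$ with $\bra{\psi_j}M^{x,j}_a\otimes\I\ket{\psi_j}\to 0$, while keeping $\sum_a M^{x,j}_a=\I$. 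The natural device is to enlarge Alice's Hilbert space by one dimension, $\cH^j_\tA\oplus\C$, put $\epsilon_j$ weight of the state on the new summand so $\ket{\psi_j}\mapsto\sqrt{1-\epsilon_j}\ket{\psi_j}\oplus\sqrt{\epsilon_j}\ket{\text{anc}}\ket{0}_\tB$ (or simply tensor on a fixed ancilla qubit in a nearly-$\ket 0$ state), then let the surplus outcomes $a>a'_x$ be supported (projectively) on that tiny component and renormalise the retained $M^{x,j}_a$ accordingly; taking $\epsilon_j\to 0$ gives $\hat p_j\to\hat p$. One must check the retained marginals and conditionals still converge to those of $\hat p$, which they do since the perturbation is $O(\epsilon_j)$ in operator norm. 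I expect \textbf{this re-inflation step to be the main obstacle}: one needs the modification to (i) preserve the projective/POVM structure and normalisation exactly for every $j$, (ii) not disturb Bob's side, and (iii) be uniformly small so that the limit is exactly $\hat p$ and not some nearby correlation; the bookkeeping of how the collapsed last outcome $A'_x$ of $p'_j$ splits back into outcomes $a'_x,\dots,A_x$ of $\hat p_j$ has to be done so that each of those extra outcomes individually tends to the correct (zero) value, not merely in aggregate. Once the finite-dimensional $\hat p_j$ are in hand, $\hat p_j\to\hat p$ gives $\hat p\in\cC_{qa}$, completing the proof.
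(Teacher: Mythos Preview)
Your proof is correct, but you have inverted which direction is hard. The re-inflation ($p'\in\cC_{qa}\Rightarrow\hat p\in\cC_{qa}$) is in fact the trivial half, and the paper dispatches it in one paragraph: for each realisation of $p'_j$ simply declare $M^{x,j}_a=0$ for every missing outcome $a>A'_x$, and for each deterministic input $x>X'$ use the POVM $(\I,0,\dots,0)$. These are already valid POVMs summing to the identity, they hit the zeros of $\hat p$ exactly (not approximately), and Bob's marginals supply the deterministic rows---no ancilla, no $\epsilon_j$, and your worry about ``splitting back'' the collapsed outcome never arises, because in this direction the $p'_j$ are arbitrary approximants in the small scenario and carry no memory of any collapsing.

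Conversely, the paper spends its effort on the reduction ($\hat p\in\cC_{qa}\Rightarrow p'\in\cC_{qa}$), and there it uses a genuinely different device from yours: it projects the \emph{state} onto the support of the retained outcomes via $\Pi^j=\sum_{a\le A'_{x^*}}M^{x^*,j}_a$, replaces $\ket{\psi_j}$ by the renormalised $\Pi^j\otimes\I\ket{\psi_j}$, compresses all of Alice's measurements to $\Pi^j M^{x,j}_a\Pi^j$, and then controls the cross terms in the limit (this is done one input $x^*$ at a time). Your coarse-graining---absorbing the surplus outcomes into $\tilde M^{x,j}_{A'_x}$ and leaving the state untouched---is equally valid and noticeably shorter, since convergence follows immediately from $\hat p_{j,\tA}(a|x)\to 0$ for the absorbed outcomes without any state surgery. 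So on this half of the lemma your route is the more economical one.
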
 

\begin{proof}
Firstly, it is clear that removing all the deterministic inputs $x>X'$ from each correlation in the sequence Eq.~\eqref{eq:seq} leaves a sequence of correlations $p_j(a,b|x,y)=\hat{p}_j(a,b|x,y)$ for all $a\in[A_x]$, $b\in[B_y]$, $x\in[\cX']$ and $y\in[\cY]$ with a quantum realisation that tend to a correlation $p$ defined by $p(a,b|x,y)=\hat{p}(a,b|x,y)$ for all $a\in[A_x]$, $b\in[B_y]$, $x\in[\cX']$ and $y\in[\cY]$.

We then proceed by removing all the zero probability outcomes of one input $x^*\in[X']$ of Alice, by mapping $p$ to a correlation $p^*$ in the scenario $(\bA^*,\bB,X',Y)$, where $A^*_{x^*}=a'_{x^*}$ and $A^*_x=A_x$ for all $x\neq x^*$. We take $p^*(a,b|x,y)=p(a,b|x,y)$ for all $a\in[A^*_x]$, $b\in[B_y]$, $x\in[X']$ and $y\in[Y]$. Let $\Pi^j=\sum_{a\leq A^*_{x^\ast}}M^{x,j}_a$ and $\cH_\tA^{j*}$ be the support of $\Pi^j$. Observe that, denoting the identity operator on $\cH_\tA^j$ ($\cH_\tB^j$) by $\I_{\tA^j}$ ($\I_{\tB^j}$), we have that
\begin{equation}
0=\sum_{a>A^*_{x^*}}p_\tA(a|x^*)=\lim_{j\to\infty}\bra{\psi_j}(\I_{\tA^j}-\Pi^j)\otimes\I_{\tB^j}\ket{\psi_j}=\lim_{j\to\infty} \left( 1-\bra{\psi_j}\Pi^j\otimes\I_{\tB^j}\ket{\psi_j} \right) ,
\end{equation}
which gives 
\begin{equation}\label{eq:lim1}
\lim_{j\to\infty}\bra{\psi_j}\Pi^j\otimes\I_{\tB^j}\ket{\psi_j}=1.
\end{equation}

We may then define the states 
\begin{equation}
\ket{\psi_j^*}=\frac{\Pi^j\otimes\I_{\tB^j}\ket{\psi_j}}{\sqrt{\bra{\psi_j}\Pi^j\otimes\I_{\tB^j}\ket{\psi_j}}}\in\cH_\tA^{j*}\otimes\cH^j_\tB\,,
\end{equation}  
where without loss of generality we can assume that the denominator is strictly positive for all $j\in\mathbb{N}$ since it tends to one in the limit $j\to\infty$. 

It also follows from Eq.~\eqref{eq:lim1} that (embedding $\ket{ \psi^*_j }$ into $\cH_\tA^j \otimes \cH_\tB^j$)
\begin{equation}\label{eq:prodlim}
\lim_{j\to\infty}\braket{\psi_j|\psi^*_j}=\lim_{j\to\infty}\frac{\bra{\psi_j}\Pi^j\otimes\I_{\tB^j}\ket{\psi_j}}{\sqrt{\bra{\psi_j}\Pi^j\otimes\I_{\tB^j}\ket{\psi_j}}}=1\,. 
\end{equation}
Observe that we can find unit vectors $\ket{ \psi^\perp_j } \in \cH_\tA^{j}\otimes\cH^j_\tB$ orthogonal to $\ket{ \psi_j }$ such that (again for the embedding)
\begin{equation}\label{eq:decomp}
\ket{\psi^*_j}=\alpha_j\ket{\psi_j}+\beta_j \ket{\psi^\perp_j}
\end{equation}
for each $j\in\mathbb{N}$. Note that we can choose the $\ket{\psi^\perp_j}$ such that $\beta_j \in \mathbb{R}$, and we have that $\alpha_j=\bra{\psi_j}\Pi^j\otimes\I_{\tB^j}\ket{\psi_j}\in\mathbb{R}$, which also implies $\beta_j = \sqrt{1-\alpha_j^2}$. By Eq.~\eqref{eq:prodlim}, we find that $\lim_{j\to\infty}\alpha_j=1$ and hence, also $\lim_{j\to\infty}\sqrt{1-\alpha_j^2}=0$.

Now, consider the sequence of quantum spatial correlations in the scenario $(\bA^*,\bB,\cX',\cY)$ given by 
\begin{equation}
p^*_j(a,b|x,y)=\bra{\psi^*_j}R^{x,j}_a\otimes N^{y,j}_b\ket{\psi^*_j}\,,
\end{equation}
for all $a\in[A^*_x]$, $b\in[B_y]$, $x\in[X']$ and $y\in[Y]$, where the operators $R^{x,j}_a=\Pi^jM^{x,j}_a\Pi^j$ form projective measurements on the subspace $\cH_\tA^{j*}$ of $\cH_\tA^{j}$ since they satisfy $\sum_{a\in [A^*_x]}R^{x,j}_a=\sum_{a\in [A^*_x]}\Pi^jM^{x,j}_a\Pi^j=\Pi^j\I_{\tA^j}\Pi^j=\I_{\tA^{*j}}$.

We can now evaluate the limit of our sequence of correlations using the expression in Eq.~\eqref{eq:decomp}:
\begin{equation}\label{eq:seqlim}
\begin{aligned}
\lim_{j\to\infty}p^*_j(a,b|x,y)=&\lim_{j\to\infty}\bra{\psi^*_j}R^{x,j}_a\otimes N^{y,j}_b\ket{\psi^*_j}\\
=&\lim_{j\to\infty}\bra{\psi^*_j}M^{x,j}_a\otimes N^{y,j}_b\ket{\psi^*_j}\\
=&\lim_{j\to\infty}\alpha_j^2\bra{\psi_j}M^{x,j}_a\otimes N^{y,j}_b\ket{\psi_j}\\&+2\mathrm{Re}\left(\alpha_j\sqrt{1-\alpha_j^2}\bra{\psi^\perp_j}M^{x,j}_a\otimes N^{y,j}_b\ket{\psi_j}\right)\\&+(1-\alpha_j^2)^2\bra{\psi^\perp_j}M^{x,j}_a\otimes N^{y,j}_b\ket{\psi^\perp_j}\,.
\end{aligned}
\end{equation}
Since we have $0\leq M^{x,j}_a\otimes N^{y,j}_b\leq\I_{\tA^{j}\tB^j}$, both $\abs{\bra{\psi^\perp_j}M^{x,j}_a\otimes N^{y,j}_b\ket{\psi^\perp_j}}$ and $\abs{\bra{\psi^\perp_j}M^{x,j}_a\otimes N^{y,j}_b\ket{\psi_j}}$ are bounded in the unit interval. It follows that the final two summands of the last expression in Eq.~\eqref{eq:seqlim} tend to zero in the limit due to the factor of $\sqrt{1-\alpha_j^2}$. Remembering that $\alpha_j\to1$ as $j\to\infty$, we are left with 
\begin{equation}
\lim_{j\to\infty}p^*_j(a,b|x,y)=\lim_{j\to\infty}\bra{\psi_j}M^{x,j}_a\otimes N^{y,j}_b\ket{\psi_j}=
p(a,b|x,y)\,,
\end{equation}
for all $a\in[A^*_x]$, $b\in[B_y]$, $x\in[X']$ and $y\in[Y]$. This argument can be applied iteratively for each $x\in[X']$ to show that the correlation $p'=\tau(\hat{p})$ is a member of the set $\cC_{qa}$ in the Bell scenario $(\bA',\bB,\cX',\cY)$.

Conversely, given any correlation $p'\in\cC_{qa}$ in a scenario $(\bA',\bB,\cX',\cY)$ we can embed the correlation in a scenario $(\bA,\bB,\cX,\cY)$ in which Alice has more inputs and/or outputs via the map
\begin{equation}
\hat{p}(a,b|x,y)=\begin{cases}p'(a,b|x,y) &\text{for }a\in[A'_x],\, b\in[B_y],\, x\in[\cX'],\, y\in[\cY]\\
1&\text{for }a=1\text{ and }X'<x\leq X\\
0&\text{otherwise.}
\end{cases}
\end{equation}
If $p'\in\cC_{qa}$ then there exists a sequence of correlations $p'_j$ which tend to $p'$ and have quantum realisations. This sequence can be transformed to a sequence of quantum spatial correlations in the scenario $(\bA,\bB,\cX,\cY)$ which tends to $\hat{p}$ by adding zero operators to the POVMs for the additional probability zero outcomes in the existing inputs of Alice and adding POVMs given by the identity operator followed by zero operators for the additional deterministic settings.
\end{proof} 

\begin{rem}\label{rem:Cqs_small}
%A correlation $\hat{p}$ is in the set $\cC_{qs}$ of a Bell scenario $(\bA,\bB,\cX,\cY)$ if and only if $p'=\tau(\hat{p})$ is a correlation in the set $\cC_{qs}$ of the Bell scenario $(\bA',\bB,\cX',\cY)$. This can be shown by a proof completely analogous to the proof above, with the significant simplification that there is no need to take limits at any point.
One can analogously show that a correlation $\hat{p}$ is in the set $\cC_{qs}$ of a Bell scenario $(\bA,\bB,\cX,\cY)$ if and only if $p'=\tau(\hat{p})$ is a correlation in the set $\cC_{qs}$ of the Bell scenario $(\bA',\bB,\cX',\cY)$ with an argument that follows the proof of Lemma~\ref{lem:Cqa_small} but with the simplification of not having to consider limits of sequences of correlations.
\end{rem}

Now, let $p$ be a correlation in a Bell scenario $(\bA,\bB,\cX,\cY)$ that is contained in the closure $\cC_{qa}$ of the set $\cC_{qs}$ of quantum spatial correlations but that is not contained in the set $\cC_{qs}$ itself, i.e. $p\in\cC_{qa}\backslash\cC_{qs}$~\cite{slofstra_2019}. It follows from Lemma \ref{lem:Cqa_small} and Remark \ref{rem:Cqs_small} that $p'=\tau(p)\in\cC_{qa}\backslash\cC_{qs}$ in the scenario $(\bA',\bB,\cX',\cY)$. First, we will construct a sequence $( p^n )_{n \in \mathbb{N}}$ of correlations in $\cC_{qs}$ in the scenario $(\bA',\bB,\cX',\cY)$ converging to $\tp$ such that every element of the sequence has the same marginals for Alice as $\tp$, i.e.~$p^n_\tA = \tp_\tA$ for all $n \in \mathbb{N}$. Since the correlations $p^n$ will all have the same marginals for Alice, they will each be mapped to a behaviour $q^n=\vicky{\Gamma}(p^n)$ in the same single contextuality scenario $(\norm{\cA'^p},\cY,\bB,\cNS(\tp_\tA))$ where $\norm{\cA'^p}=\sum_{x\in[X']}\abs{\cA^p_x}=\sum_{x\in[X']}\abs{\cA^{p'}_x}$. 

Next, we will show that this sequence of behaviours converges to $q=\vicky{\Gamma}(p')$, meaning $q$ is in the closure $\overline{\cQ}$ of the set of quantum behaviours. Finally, it follows that $q\notin\cQ$ since otherwise we could construct a quantum realisation of the correlation $p'=\vicky{\Gamma}^{-1}(q)$, via the method in Sec.~\ref{sec:ctb}. Thus, we have that $q\in\overline{\cQ}\setminus\cQ$.

Consider the correlation $p^{\mathrm{int}}(a,b|x,y) = \tp_\tA(a|x) \frac{1}{B_y}$. We demonstrate that this correlation is in the relative interior of the local polytope (and, hence, in the relative interior of $\cC_{qs}$) as follows. The vertices of the local polytope are exactly the correlations $V$ that admit an expression
\begin{equation}
V(a,b|x,y)=v^\tA(a|x)v^\tB(b|y)\,,
\end{equation} 
for two deterministic conditional probability distributions $v^\tA$ and $v^\tB$. In any polytope of unconstrained conditional probability distributions $r(c|z)$ over some variables $c\in[C]$ and $z\in[Z ]$ for some $C,Z \in\mathbb{N}$, any point on the boundary contains at least one zero element, i.e.~$r(c|z)=0$ for some $c\in[C]$ and $z\in[Z ]$. Both distributions $p^{\mathrm{int}}_\tA(a|x)=\tp_\tA(a|x)$ and $p^{\mathrm{int}}_\tB(b|y)=\frac{1}{B_y}$ are entirely non-zero and, therefore, are in the relative interiors of their respective polytopes. 

It follows that $p^{\mathrm{int}}_\tA(a|x)$ and $p^{\mathrm{int}}_\tB(b|y)$ admit convex decompositions $p^{\mathrm{int}}_\tA(a|x)=\sum_{a,x}\alpha_{a,x}v^\tA(a|x)$ and $p^{\mathrm{int}}_\tB(b|y)=\sum_{b,y}\beta_{b,y}v^\tB(b|y)$ where $\alpha_{a,x}>0$, $\beta_{b,y}>0$ for all $a\in[A'_x]$, $b\in[B_y]$, $x\in[\cX]$ and $y\in[\cY]$ and $\sum_{a,x}\alpha_{a,x}=\sum_{b,y}\beta_{b,y}=1$. Thus, we find that $p^{\mathrm{int}}(a,b|x,y)$ also admits a convex decomposition in which all of the vertices of the local polytope have a strictly non-zero coefficient, namely,
\begin{equation}
p^{\mathrm{int}}(a,b|x,y)=\sum_{a,b,x,y}\alpha_{a,x}\beta_{b,y}v^\tA(a|x)v^\tB(b|y)\,,
\end{equation}
showing that $p^{\mathrm{int}}$ is in the relative interior of the local polytope.

Additionally, we have that $p^{\mathrm{int}}_\tA = \tp_\tA$. Define $p^n = \frac1n p^{\mathrm{int}} + \left( 1 - \frac1n \right) \tp$. Each term of the sequence has the same marginals for Alice, $p^n_\tA=\tp_\tA$, and is in the relative interior of $\cC_{qa}$, since it is a mixture of a point in $\cC_{qa}$ and a point in the relative interior of $\cC_{qa}$. It follows that $(p^n)_{n\in\mathbb{N}}$ is a sequence of points in $\cC_{qs}$ that converge to $\tp$. 

Now, consider the image $(q^n)_{n\in\mathbb{N}}$ of the sequence $(p^n)_{n\in\mathbb{N}}$ under our map $\vicky{\Gamma}$ [see Eq.\vicky{~\eqref{eq:btc}}] in the contextuality scenario $(\norm{\cA'^p},\cY,\bB,\cNS(\tp_\tA))$, noting that all points in the sequence are mapped to the same contextuality scenario since they have the same marginals $\tp_\tA(a|x)$ for Alice. Since each $p^n\in\cC_{qs}$, we have that each $q^n\in\cQ$. Furthermore, we have that for every $\epsilon>0$ there exists $N_\epsilon\in\mathbb{N}$ such that $\norm{p^n-\tp}_1<\epsilon$ for all $n\ge N_\epsilon$. Thus, letting $\epsilon'=\min_{a,x}\{\tp_\tA(a|x)\}\epsilon$, we have that for all $n\ge N_{\epsilon'}$

\begin{equation}
\begin{aligned}
\norm{q^n-q}_1&=\sum_{a,b,x,y}\abs{q^n(b|[a|x],y)-q(b|[a|x],y)}\\
&=\sum_{a,b,x,y}\abs{\frac{p^n(a,b|x,y)}{\tp_\tA(a|x)}-\frac{\tp(a,b|x,y)}{\tp_\tA(a|x)}}\\
& \le \max_{a,x}\left\{\frac{1}{\tp_\tA(a|x)}\right\} \sum_{a,b,x,y} \abs{ p^n(a,b|x,y) - \tp(a,b|x,y) } \\
&< \max_{a,x}\left\{\frac{1}{\tp_\tA(a|x)}\right\}\epsilon' = \frac{\epsilon'}{ \min_{a,x} \{ \tp_\tA(a|x) \} } =\epsilon\,,
\end{aligned}
\end{equation}
and we find that $(q^n)\subset\cQ$ converges to $q$, meaning $q\in\overline{\cQ}$.

On the other hand, we have that $q\notin\cQ$ since otherwise we could construct a quantum realisation of the correlation $p'=\vicky{\Gamma}^{-1}(q)$, via the method in Sec.~\ref{sec:ctb}. Therefore, it follows that that $q\in\overline{\cQ}\setminus\cQ$ and $\cQ$ is not closed.

\section{Proof of Corollary 4}\label{sec:proof_compute}

We require two results from the literature. Firstly, it is known that the following weak-membership problem for $\cC_{qa}=\overline{\cC_{qs}}$ is undecidable~\cite{ji2020mip}: 
\begin{itemize}
\item[[WMEM\!\!]] given a correlation $p$ in a Bell scenario $(\bA,\bB,\cX,\cY)$ and $\varepsilon>0$ decide whether $p\in \cC_{qa}$ or $p\notin \cC^\varepsilon_{qa}=\{p : \,\norm{p-p'}_1 \leq \varepsilon\text{ for some }p'\in \cC_{qa}\}$ with the promise that either $p\in \cC_{qa}$ or $p\notin \cC^\varepsilon_{qa}$,
\end{itemize}
where $\norm{.}_1$ is the $\ell_1$-norm. 

Secondly, it is known~\cite{ji2020mip} that for any $\delta>0$ there exists an algorithm (FIN) that verifies that a correlation $p\in \cC^\delta_{qa}$ and halts for any correlation $p\in \cC^\delta_{qa}$. The input of this algorithm is a fixed correlation $p$. At the $d$-th step, (FIN:$d$), of the algorithm a finite set of quantum correlations, $\{ \tilde{p}^d_n \}_n$, is constructed such that these correlations are realisable when $\mathrm{dim}(\cH_\tA)=\mathrm{dim}(\cH_\tB)=d$, and moreover, for any correlation, $\tilde{p}$ that is also realisable with such Hilbert spaces we have $\norm{\tilde{p}-\tilde{p}^d_n}_1<\delta$ for some $\tilde{p}^d_n$. This can be achieved, because the set of quantum correlations achievable in a fixed dimension is compact. Then, the distance $\norm{p-\tilde{p}^d_n}_1$ is calculated for all $n$. If this distance is less than $\delta$ for some $n$ the algorithm returns $p\in \cC^\delta_{qa}$. Otherwise, the algorithm proceeds to (FIN:$d+1$). Since the closure of the set of quantum correlations realisable in some finite dimension is the same as $\cC_{qa}$~\cite{scholz2008tsirelson, Fri12} , it follows that if $p\in \cC^\delta_{qa}$ then the algorithm halts for some finite $d$, otherwise it does not halt.

Now, suppose there exists a hierarchy of SDPs wherein each level, $j$, decides whether a behaviour $q$ in a contextuality scenario $(Z ,\cY,\bB,\cNS (\tpa))$ is in a superset $\cQ_j$ of $\cQ$ or not, and these supersets converge to $\cQ$ as $j$ tends to infinity (i.e. $\cQ=\cap_{\mathbb{N}}\cQ_j$). Under this hypothesis, we will construct an algorithm that decides the problem~[WMEM], and therefore reach a contradiction.

We can now give the algorithm that decides [WMEM]. Given a correlation $p$ with the promise that either $p\in \cC_{qa}$ or $p\notin \cC^\varepsilon_{qa}$:

\begin{itemize}
\item[Step (1):] If $p_\tA$ is deterministic return $p\in\cC_{qa}$ and halt. \\
Otherwise, relabel Alice's inputs such that any inputs giving a deterministic outcome are labelled with the highest values $x>X'$ in $[X]$ and for each $x\in[X']$ relabel the outcomes such that any zeroes in $p_A$ are for outcomes $a> a'_x$ for some $a'_x\in\mathbb{N}$ and  (we retain the notation $p$ for this relabelling), and map $p$ to the correlation $\tp=\tau(p)$ in the Bell scenario $(\bA',\bB,\cX',\cY)$ such that $\tp$ has marginals $\tp_\tA$ strictly between zero and one (see Sec.~\ref{sec:proof_nclosed}).
\item[Step (2):] Map $\tp$ to $q=\vicky{\Gamma}(\tp)$ in the contextuality scenario $(\norm{\cA'^p},\cY,\bB,\cNS(\tp_\tA))$, where $\norm{\cA'^p}=\sum_{x\in[X']}\abs{\cA^p_x}=\sum_{x\in[X']}\abs{\cA^{p'}_x}$---see \vicky{Eqs.~\eqref{eq:Gam} and~\eqref{eq:btc}.}.
\item[Step ($j\geq3$):] Use level $j$ of the SDP hierarchy to decide whether $q\in\cQ_j$. \\
If $q\notin\cQ_j$: return $p\notin C^\varepsilon_{qa}$ and halt. \\
If $q\in\cQ_j$: run step (FIN:$j$) of the algorithm (FIN) on $\tp$ with $\delta<\varepsilon$. \\
If (FIN:$j$) returns $\tp\in \cC^\delta_{qa}$: return $p\in \cC_{qa}$ and halt. \\
Otherwise, perform Step ($j$+1).
\end{itemize}

To see that the algorithm would return the correct answer, first, define $\cQ^{\varepsilon}=\{q :\norm{q-q'}_1 \leq \varepsilon\text{ for some }q'\in \cQ\}$.

{\bf Case (1) $p\notin\cC^\varepsilon_{qa}$}: \\
Let $\tp_\varepsilon$ be any behaviour such that $\norm{\tp-\tp_\varepsilon}_1<\varepsilon$. Then we have that $\norm{p-\tau^{-1}(\tp_\varepsilon)}_1=\norm{\tp-\tp_\varepsilon}_1<\varepsilon$ and thus $\tau^{-1}(\tp_\varepsilon)\notin\cC_{qa}$. Therefore, $\tp_\varepsilon\notin\cC_{qa}$ in the scenario $(\bA',\bB,\cX',\cY)$ and $\tp\notin\cC^\varepsilon_{qa}$, since there is an $\varepsilon$-ball around $p'$ entirely outside of $\cC_{qa}$.

Now we will show that $q=\vicky{\Gamma}(\tp)\notin\cQ^{\varepsilon}$. Consider any behaviour $q_\varepsilon$ such that $\norm{q_\varepsilon-q}_1<\varepsilon$ and let $p_\varepsilon = \vicky{\Gamma}^{-1}(q_\varepsilon)$ be the image of $q_\varepsilon$ under the map in Eq.\vicky{~\eqref{eq:ctb}} (where $\tpa=\tp_\tA$). Then we have
\begin{equation}
\begin{aligned}
\norm{p_\varepsilon-\tp}_1&=\sum_{a,b,x,y}\abs{\tp_\tA(a|x)q_\varepsilon(b|[a|x],y)-\tp(a,b|x,y)}\\
&=\sum_{a,b,x,y}\abs{\tp_\tA(a|x)q_\varepsilon(b|[a|x],y)-\tp_\tA(a|x)q(b|[a|x],y)}\\
&=\sum_{a,b,x,y}\tp_\tA(a|x)\abs{q_\varepsilon(b|[a|x],y)-q(b|[a|x],y)} \\
&\le \sum_{a,b,x,y} \abs{q_\varepsilon(b|[a|x],y)-q(b|[a|x],y)} <\varepsilon\,.
\end{aligned}
\end{equation}  
Therefore, we have that $p_\varepsilon\notin\cC_{qa}$, which implies $p_\varepsilon\notin\cC_{qs}$ and thus $q_\varepsilon = \vicky{\Gamma}(p_\varepsilon) \notin\cQ$. We have shown that there is an $\varepsilon$-ball around $q$ entirely outside of $\cQ$, thus $q\notin\cQ^{\varepsilon}$. 

For a finite level $j$ of the SDP hierarchy we find $q\notin\cQ_j$ and hence, at a finite step $(j+1)$ of the algorithm we obtain $p\notin\cC^\varepsilon_{qa}$. On the other hand, at no Step $(j)$ will the algorithm return $p\in\cC^{qa}$, since we have shown that $\tp\notin\cC^\varepsilon_{qa}$ in the scenario $(\bA',\bB,\cX',\cY)$, and therefore $\norm{\tp-p''}_1>\varepsilon>\delta$ for any correlation $p''$ realisable with finite dimensional quantum systems.

{\bf Case (2) $p\in\cC_{qa}$:} \\
We have that $\tp=\tau(p)\in \cC_{qa}$ in the scenario $(\bA',\bB,\cX',\cY)$, and we will show that $q = \vicky{\Gamma}(p') \in\overline{\cQ}$ using the same method as in Sec.~\ref{sec:proof_nclosed}. To do so, we will construct a sequence $( p^n )_{n \in \mathbb{N}}$ of correlations in $\cC_{qs}$ converging to $\tp$ such that every element of the sequence has the same marginals for Alice as $\tp$, i.e.~$p^n_\tA = \tp_\tA$ for all $n \in \mathbb{N}$. Since the correlations $p^n$ will all have the same marginals for Alice, they will all be mapped to a behaviour $q^n=\Lambda(p^n)$ in the same contextuality scenario $(\norm{\cA'^p},\cY,\bB,\cNS(\tp_\tA))$. 

Consider the correlation $p^{\mathrm{int}}(a,b|x,y) = \tp_\tA(a|x) \frac{1}{B_y}$. We demonstrate that this correlation is in the relative interior of the local polytope (and, hence, in the relative interior of $\cC_{qs}$) as follows. The vertices of the local polytope are exactly the correlations $V$ that admit an expression
\begin{equation}
V(a,b|x,y)=v^\tA(a|x)v^\tB(b|y)\,,
\end{equation} 
for two deterministic, conditional probability distributions $v^\tA$ and $v^\tB$. In any polytope of unconstrained conditional probability distributions $r(c|z)$ over some variables $c\in[C]$ and $z\in[Z ]$ for some $C,Z \in\mathbb{N}$, any point on the boundary contains at least one zero element, i.e.~$r(c|z)=0$ for some $c\in[C]$ and $z\in[Z ]$. Both distributions $p^{\mathrm{int}}_\tA(a|x)=\tp_\tA(a|x)$ and $p^{\mathrm{int}}_\tB(b|y)=\frac{1}{B_y}$ are entirely non-zero and, therefore, are in the relative interiors of their respective polytopes. 

It follows that $p^{\mathrm{int}}_\tA(a|x)$ and $p^{\mathrm{int}}_\tB(b|y)$ admit convex decompositions $p^{\mathrm{int}}_\tA(a|x)=\sum_{a,x}\alpha_{a,x}v^\tA(a|x)$ and $p^{\mathrm{int}}_\tB(b|y)=\sum_{b,y}\beta_{b,y}v^\tB(b|y)$ where $\alpha_{a,x}>0$, $\beta_{b,y}>0$ for all $a\in[A'_x]$, $b\in[B_y]$, $x\in[\cX]$ and $y\in[\cY]$ and $\sum_{a,x}\alpha_{a,x}=\sum_{b,y}\beta_{b,y}=1$. Thus, we find that $p^{\mathrm{int}}(a,b|x,y)$ also admits a convex decomposition in which all of the vertices of the local polytope have a strictly non-zero coefficient, namely,
\begin{equation}
p^{\mathrm{int}}(a,b|x,y)=\sum_{a,b,x,y}\alpha_{a,x}\beta_{b,y}v^\tA(a|x)v^\tB(b|y)\,,
\end{equation}
showing that $p^{\mathrm{int}}$ is in the relative interior of the local polytope.

Additionally, we have that $p^{\mathrm{int}}_\tA = \tp_\tA$. Define $p^n = \frac1n p^{\mathrm{int}} + \left( 1 - \frac1n \right) \tp$. Each term of the sequence has the same marginals for Alice, $p^n_\tA=\tp_\tA$, and is in the relative interior of $\cC_{qa}$, since it is a mixture of a point in $\cC_{qa}$ and a point in the relative interior of $\cC_{qa}$. It follows that $(p^n)_{n\in\mathbb{N}}$ is a sequence of points in $\cC_{qs}$ that converge to $\tp$. 

Now consider the image $(q^n)_{n\in\mathbb{N}}$ of the sequence $(p^n)_{n\in\mathbb{N}}$ under our map $\vicky{\Gamma}$ [see Eq.~\vicky{\eqref{eq:btc}}] in the contextuality scenario $(\norm{\cA'^p},\cY,\bB,\cNS(\tp_\tA))$, noting that all points in the sequence are mapped to the same contextuality scenario since they have the same marginals $\tp_\tA(a|x)$ for Alice. Since each $p^n\in\cC_{qs}$, we have that each $q^n\in\cQ$. Furthermore, we have that for every $\epsilon>0$ there exists $N_\epsilon\in\mathbb{N}$ such that $\norm{p^n-\tp}_1<\epsilon$ for all $n\ge N_\epsilon$. Thus, letting $\epsilon'=\min_{a,x}\{\tp_\tA(a|x)\}\epsilon$, we have that for all $n\ge N_{\epsilon'}$

\begin{equation}
\begin{aligned}
\norm{q^n-q}_1&=\sum_{a,b,x,y}\abs{q^n(b|[a|x],y)-q(b|[a|x],y)}\\
&=\sum_{a,b,x,y}\abs{\frac{p^n(a,b|x,y)}{\tp_\tA(a|x)}-\frac{\tp(a,b|x,y)}{\tp_\tA(a|x)}}\\
& \le \max_{a,x}\left\{\frac{1}{\tp_\tA(a|x)}\right\} \sum_{a,b,x,y} \abs{ p^n(a,b|x,y) - \tp(a,b|x,y) } \\
&< \max_{a,x}\left\{\frac{1}{\tp_\tA(a|x)}\right\}\epsilon' = \frac{\epsilon'}{ \min_{a,x} \{ \tp_\tA(a|x) \} } =\epsilon\,,
\end{aligned}
\end{equation}
and we find that $(q^n)\subset\cQ$ converges to $q$, meaning $q\in\overline{\cQ}$.

Therefore, in this case the algorithm will not return $p\notin \cC^\varepsilon_{qa}$ in any step $j$. On the other hand, at some finite step (FIN:$d$) the algorithm (FIN) will establish $\tp\in \cC^\delta_{qa}$ and our algorithm will return $p\in \cC_{qa}$ and halt at Step $d$.

\end{document}